\newtheorem{theorem}{Theorem}[section]
\newtheorem{proposition}[theorem]{Proposition}
\theoremstyle{definition}
\newtheorem{definition}[theorem]{Definition}
\newtheorem{remark}[theorem]{Remark}
\theoremstyle{remark}
\numberwithin{equation}{section}
\begin{document}

\title[Smooth Geometry of Diffusion Algebras]{Smooth Geometry of Diffusion Algebras}


\author{Andr\'es Rubiano}
\address{Universidad Nacional de Colombia - Sede Bogot\'a}
\curraddr{Campus Universitario}
\email{arubianos@unal.edu.co}
\address{Universidad ECCI}
\curraddr{Campus Universitario}
\email{arubianos@ecci.edu.co}
\thanks{}


\author{Armando Reyes}
\address{Universidad Nacional de Colombia - Sede Bogot\'a}
\curraddr{Campus Universitario}
\email{mareyesv@unal.edu.co}


\subjclass[2020]{16E45, 16S32, 16S36, 16S37, 16S38, 16W20, 16W50, 58B34}

\keywords{Differentially smooth algebra, integrable calculus, skew polynomial ring, generalized Weyl algebra, diskew polynomial ring, bi-quadratic algebra}

\date{}

\dedicatory{Dedicated to Professor Oswaldo Lezama on the Occasion of His 68th Birthday}

\begin{abstract} 

In this paper, we study the differential smoothness of diffusion algebras. 

\end{abstract}

\maketitle


\section{Introduction}

Isaev et al. \cite{IsaevPyatovRittenberg2001} introduced {\em diffusion algebras} in the context of one-dimensional stochastic processes with exclusion in statistical mechanics. One year later, Pyatov and Twarock \cite{PyatovTwarock2002} presented a construction formalism for these algebras from the mathematical point of view, and proved the results formulated in \cite{IsaevPyatovRittenberg2001}. Just as they said, \textquotedblleft Diffusion algebras play a key role in understanding one-dimensional stochastic processes. In the case of $N$ species of particles with only nearest-neighbor interactions with exclusion on a one-dimensional lattice, diffusion algebras are useful tools in finding expressions for the probability distribution of the stationary state of these processes. Following the idea of matrix product states, the latter are given in terms of monomials built from the generators of a quadratic algebra\textquotedblright\ \cite[p. 3268]{PyatovTwarock2002}. Hinchcliffe in his Ph.D. thesis \cite{Hinchcliffe2005} and different researchers have investigated several ring, theoretical and homological properties of diffusion algebras \cite{Fajardoetal2020, HamidizadehHashemiReyes2020, Levandovskyy2005, ReyesRodriguez2021, ReyesSuarez2016, ReyesSuarez2021, Twarok2003}. 

Our purpose in this paper is to investigate the {\em differential smoothness} defined by Brzezi{\'n}ski and Sitarz \cite{BrzezinskiSitarz2017} of diffusion algebras. Let us say some preliminary words on the subject.

Following Brzezi\'nski and Lomp's ideas in their paper \cite[Section 1]{BrzezinskiLomp2018}, \textquotedblleft the study of smoothness of algebras goes back at least to Grothendieck's EGA \cite{Grothendieck1964}. The concept of a {\em formally smooth commutative} ({\em topological}) {\em algebra} introduced by him was extended to the noncommutative setting by Schelter \cite{Schelter1986}. An algebra is {\em formally smooth} if and only if the kernel of the multiplication map is projective as a bimodule. This notion arose as a replacement of a far too general definition based on the finiteness of the global dimension; Cuntz and Quillen \cite{CuntzQuillen1995} called these algebras {\em quasi-free}. Precisely, the notion of smoothness based on the finiteness of this dimension was refined by Stafford and Zhang \cite{StaffordZhang1994}, where a Noetherian algebra is said to be {\em smooth} provided that it has a finite global dimension equal to the homological dimension of all its simple modules\textquotedblright. In the homological setting, Van den Bergh \cite{VandenBergh1998} called an algebra {\em homologically smooth} if it admits a finite resolution by finitely generated projective bimodules. The characterization of this kind of smoothness for the noncommutative pillow, the quantum teardrops, and quantum homogeneous spaces was made by Brzezi{\'n}ski \cite{Brzezinski2008, Brzezinski2014} and Kr\"ahmer \cite{Krahmer2012}, respectively.

Brzezi{\'n}ski and Sitarz \cite{BrzezinskiSitarz2017} defined other notion of smoothness of algebras, termed {\em differential smoothness} due to the use of differential graded algebras of a specified dimension that admits a noncommutative version of the Hodge star isomorphism, which considers the existence of a top form in a differential calculus over an algebra together with a string version of the Poincar\'e duality realized as an isomorphism between complexes of differential and integral forms. This new notion of smoothness is different and more constructive than the homological smoothness mentioned above. \textquotedblleft The idea behind the {\em differential smoothness} of algebras is rooted in the observation that a classical smooth orientable manifold, in addition to de Rham complex of differential forms, admits also the complex of {\em integral forms} isomorphic to the de Rham complex \cite[Section 4.5]{Manin1997}. The de Rham differential can be understood as a special left connection, while the boundary operator in the complex of integral forms is an example of a {\em right connection}\textquotedblright\ \cite[p. 413]{BrzezinskiSitarz2017}.

Several authors (e.g. \cite{Brzezinski2015, Brzezinski2016, BrzezinskiElKaoutitLomp2010, BrzezinskiLomp2018, BrzezinskiSitarz2017, DuboisVioletteKernerMadore1990, Karacuha2015, KaracuhaLomp2014, ReyesSarmiento2022}) have characterized the differential smoothness of algebras such as the quantum two - and three - spheres, disc, plane, the noncommutative torus, the coordinate algebras of the quantum group $SU_q(2)$, the noncommutative pillow algebra, the quantum cone algebras, the quantum polynomial algebras, Hopf algebra domains of Gelfand-Kirillov dimension two that are not PI, families of Ore extensions, some 3-dimensional skew polynomial algebras, diffusion algebras in three generators, and noncommutative coordinate algebras of deformations of several examples of classical orbifolds such as the pillow orbifold, singular cones and lens spaces. Precisely, in \cite{ReyesSarmiento2022} the second author presented a first approach to the differential smoothness of diffusion algebras on three generators and, due to the relationships between diffusion algebras with 3-dimensional skew polynomial algebras \cite{BellSmith1990, Rosenberg1995}, skew bi-quadratic algebras \cite{Bavula2023} (see also double Ore extensions \cite{ZhangZhang2008, ZhangZhang2009}), and skew PBW extensions \cite{Fajardoetal2020}, and that the smoothness of all these families of algebras has been investigated in \cite{ReyesSarmiento2022, ReyesSuarez2016, RubianoReyes2024DSBiquadraticAlgebras, RubianoReyes2024DSDoubleOreExtensions, RubianoReyes2024DSSPBWKt}, our purpose in this paper is to continue with the research on this topic for diffusion algebras on four and more generators. 

The article is organized as follows. Section \ref{DAPreliminaries} contains the key facts on diffusion algebras in order to set up notation and render this paper self-contained. We adopt the terminology and notation presented by Pyatov and Twarock \cite{PyatovTwarock2002}. Sections \ref{DAThreegenerators} and \ref{DAgeneralgenerators} recall the classification of diffusion algebras on three and $n$ generators, respectively. Next, Section \ref{DefinitionsandpreliminariesDSA} contains definitions and preliminaries on differential smoothness of algebras; we also set up notation necessary for the rest of the paper. In Section \ref{DSDANgenerators} we present the original results of the paper. Tables \ref{Diffusion4} and \ref{Diffusion4(2)} present the diffusion algebras on four generators while Tables \ref{Diffusion5(1)}, \ref{Diffusion5(2)}, \ref{Diffusion5(3)}, \ref{Diffusion5(4)}, \ref{Diffusion5(5)}, \ref{Diffusion5(6)}, \ref{Diffusion5(7)}, \ref{Diffusion5(8)}, \ref{Diffusion5(9)} and \ref{Diffusion5(10)} contain diffusion algebras on five generators. Our key results are Theorems \ref{DSdiffalgN} and \ref{noDSDiffN} since these  describe explicitly those diffusion algebras that are differentially smooth, and also those families of algebras which are not, respectively.

Throughout the paper, $\mathbb{N}$ denotes the set of natural numbers including zero. The word ring means an associative ring with identity not necessarily commutative. All vector spaces and algebras (always associative and with unit) are over a fixed field $\Bbbk$. As usual, the symbols $\mathbb{R}$ and $\mathbb{C}$ denote the fields of real and complex numbers, respectively. 

\section{Diffusion algebras}\label{DAPreliminaries}

Pyatov and Twarock \cite{PyatovTwarock2002} studied diffusion algebras from the mathematician's point of view and proved a construction theorem for diffusion algebras. Let us recall the details of their treatment.

Let $\alpha, \beta$ be two elements belonging to the set $I_N := \{1, \dotsc, n\}$ with $\alpha < \beta$. Consider quadratic relations of the form
\begin{equation}\label{PyatovTwarock2002(1)}
    g_{\alpha \beta} D_{\alpha} D_{\beta} - g_{\beta \alpha} D_{\beta} D_{\alpha} = x_{\beta} D_{\alpha} - x_{\alpha} D_{\beta},
\end{equation}

with $g_{\alpha \beta} \in \mathbb{R} \ \backslash \ \{0\}, \ g_{\beta \alpha} \in \mathbb{R}$, and $x_{\alpha}, x_{\beta} \in \mathbb{C}$.

\begin{definition}[{\cite[Definition 1.1]{PyatovTwarock2002}}]
An algebra with set of generators given by $\left\{ D_{\alpha} \mid \alpha \in I_N\right\}$ and relations of type {\rm (}\ref{PyatovTwarock2002(1)}{\rm )} is called {\em diffusion algebra}, if it admits a linear PBW-basis of ordered monomials of the form
\begin{equation}\label{PyatovTwarock2002(2)}
    D_{\alpha_1}^{k_1} D_{\alpha_2}^{k_2} \dotsb D_{\alpha_N}^{k_N},\quad {\rm with} \quad k_j \in \mathbb{N} \quad {\rm and} \quad \alpha_1 > \alpha_2 > \dotsb > \alpha_N.
\end{equation}
\end{definition}

Due to physical reasons only relations with positive coefficients $g_{\alpha \beta} \in \mathbb{R}_{>0}$ and $g_{\beta \alpha} \in \mathbb{R}_{\ge 0}$ ($\alpha < \beta$) are relevant because they are interpreted as hopping rates in stochastic models \cite[p. 3268]{PyatovTwarock2002}.

Note that the requirement of having a PBW basis (\ref{PyatovTwarock2002(2)}) implies conditions on the coefficients $g_{\alpha \beta}$ and $x_{\alpha}$ in (\ref{PyatovTwarock2002(1)}) according the the {\em Diamond Lemma} in ring theory formulated by Bergman \cite{Bergman1978}. This means that we have a criterion to verify under which conditions the relations in (\ref{PyatovTwarock2002(1)}) are of PBW type: this is the case precisely if each subset of three generators $\left \{D_{\alpha}, D_{\beta}, D_{\gamma}\right\}$ with ordering $\alpha < \beta < \gamma$ is reduction unique with respect to the ordering, that is if the two ways of reducing the monomial $D_{\alpha} D_{\beta} D_{\gamma}$ to the monomial $D_{\gamma} D_{\beta} D_{\alpha}$ lead to the same result when expressed in the PBW basis (\ref{PyatovTwarock2002(2)}).

Just as Pyatov and Twarock \cite[p. 3269]{PyatovTwarock2002} asserted, the task of deriving all diffusion algebras with $N$ generators reduces to the following two steps:
\begin{enumerate}
    \item [\rm (1)] Find all diffusion algebras with three generators.
    \item [\rm (2)] Find all algebras with $N$ generators such that each subset of three generators coincides with one of the cases listed before.
\end{enumerate}

As it can be seen, the step (1) is equivalent to find those coefficients $g_{\alpha \beta}$ and $x_{\alpha}$ in (\ref{PyatovTwarock2002(1)}) for which a set $\{D_{\gamma}, D_{\beta}, D_{\alpha}\}$ of three generators is reduction unique in the above sense. The list of diffusion algebras of three generators is given in Section \ref{DAThreegenerators}. The second step is a combinatorial problem: it requires one to combine in a consistent way the three generators algebras listed before to algebras with $N$ generators for general $N > 3$.

Pyatov and Twarock considered a constructive method to approach the second step, the so-called {\em blending procedure} (Section \ref{DAgeneralgenerators}) which is an inductive procedure for the construction of diffusion algebras: \textquotedblleft It uses the three generators cases and augments them to larger units by attaching further generators in accordance with the requirements of the diamond lemma, then giving a prescription of how theser larger building blocks may be glued together in order to obtain a general diffusion algebra of $N$ generators\textquotedblright\ \cite[p. 3269]{PyatovTwarock2002}.

\subsection{Diffusion algebras on three generators}\label{DAThreegenerators}

Consider a set $\left\{ D_{\alpha}, D_{\beta}, D_{\gamma} \right\}$ of three generators with an ordering induced by the ordering of the index set $\alpha < \beta < \gamma$ and relations as in (\ref{PyatovTwarock2002(1)}). Since $g_{\alpha \beta} \neq 0$ for all $\alpha, \beta \in I_3 = \left\{ \alpha, \beta, \gamma, \right\}$ with $\alpha < \beta$ by assumption, we get the relations of the algebra given by
\begin{align}
    D_{\alpha} D_{\beta} = &\ q_{\beta \alpha} D_{\beta} D_{\alpha} + x_{\beta}^{\alpha \beta} D_{\alpha} - x_{\alpha}^{\alpha \beta} D_{\beta}, \notag \\
    D_{\alpha} D_{\gamma} = &\ q_{\gamma \alpha} D_{\gamma} D_{\alpha} + x_{\gamma}^{\alpha \gamma} D_{\alpha} - x_{\alpha}^{\alpha \gamma} D_{\gamma}, \quad {\rm and} \label{PyatovTwarock2002(3)} \\
    D_{\beta} D_{\gamma} = &\ q_{\gamma \beta} D_{\gamma} D_{\beta} + x_{\gamma}^{\beta \gamma} D_{\beta} - x_{\beta}^{\beta \gamma} D_{\gamma}, \notag \\
\end{align}

where $q_{ij} := \frac{g_{ji}}{g_{ij}}, \ x_{\gamma}^{ij} := \frac{x_{\gamma}}{g_{ij}}$ for $i, j, k \in \left\{\alpha, \beta, \gamma \right\}$ with $i < j$. Using (\ref{PyatovTwarock2002(3)}), any monomial can be expressed in terms of the PBW basis (\ref{PyatovTwarock2002(2)}), and this is well defined if we apply (\ref{PyatovTwarock2002(3)}) in different orders and obtain the same result, that is, if the reductions
\begin{equation}\label{PyatovTwarock2002(4)} 
    D_{\alpha} D_{\beta} D_{\gamma} \xrightarrow[]{} D_{\beta} D_{\alpha} D_{\gamma} \xrightarrow[]{}  D_{\beta} D_{\gamma} D_{\alpha} \xrightarrow[]{} D_{\gamma} D_{\beta} D_{\alpha}
\end{equation}

and

\begin{equation}\label{PyatovTwarock2002(5)} 
    D_{\alpha} D_{\beta} D_{\gamma} \xrightarrow[]{} D_{\alpha} D_{\gamma} D_{\beta} \xrightarrow[]{}  D_{\gamma} D_{\alpha} D_{\beta} \xrightarrow[]{} D_{\gamma} D_{\beta} D_{\alpha}
\end{equation}

using (\ref{PyatovTwarock2002(3)}) coincide when expressed in the PBW basis. These equalities lead to restrictions on the coefficients $g_{\alpha \beta}$ and $x_{\alpha}$ in (\ref{PyatovTwarock2002(1)}) given by a set of six equations and their solutions determine all diffusion algebras of three generators. 

Next, we recall the list of six equations. We assume $\alpha < \beta < \gamma$ and $x_{\beta} \neq 0$ for $j \in \left\{ \alpha, \beta, \gamma \right\}$.

\begin{enumerate}
\item {\em The case of} $A_I$: 
\begin{align*}
  gD_{\alpha}D_{\beta}  - gD_{\beta} D_{\alpha} = &\ x_{\beta}D_{\alpha}-x_{\alpha}D_{\beta},\\
  gD_{\alpha}D_{\gamma} - g D_{\gamma}D_{\alpha} = &\ x_{\gamma}D_{\alpha} - x_{\alpha}D_{\gamma}, \quad {\rm and}\\
 gD_{\beta}D_{\gamma} - gD_{\gamma} D_{\beta} = &\ x_{\gamma}D_{\beta} - x_{\beta}D_{\gamma}, 
\end{align*}

where $g \neq 0$.

\item {\em The case of} $A_{II}$:
\begin{align*}
  g_{\alpha \beta} D_{\alpha}D_{\beta} = &\ x_{\beta} D_{\alpha} - x_{\alpha} D_{\beta},\\
  g_{\alpha \gamma} D_{\alpha} D_{\gamma} = &\ x_{\gamma} D_{\alpha} - x_{\alpha} D_{\gamma}, \quad {\rm and}\\
 g_{\beta \gamma}D_{\beta} D_{\gamma} = &\ x_{\gamma} D_{\beta} - x_{\beta} D_{\gamma},
\end{align*}

where $g_{ij} := g_i - g_j$ with $g_i \neq g_j$ for all $i, j\in \{\alpha, \beta, \gamma\}$ with $i < j$.

\item {\em The case of} $B^{(1)}$:
\begin{align*}
    g_{\beta} D_{\alpha} D_{\beta} - (g_{\beta}  -\Lambda)D_{\beta} D_{\alpha}  = &\ -x_{\alpha} D_{\beta} ,\\
    gD_{\alpha} D_{\gamma}  - (g - \Lambda)D_{\gamma} D_{\alpha}  = &\ x_{\gamma} D_{\alpha}  - x_{\alpha} D_{\gamma}, \quad {\rm and}\\
    g_{\beta} D_{\beta} D_{\gamma}  - (g_{\beta}  - \Lambda)D_{\gamma} D_{\beta}  = &\ x_{\gamma} D_{\beta} ,
\end{align*}

where $g, g_{\beta}  \neq 0$. 

 \item {\em The case of} $B^{(2)}$: 
 \begin{align*}
   g_{{\alpha} {\beta} }D_{\alpha} D_{\beta}  = &\ -x_{\alpha} D_{\beta} ,\\
    g_{{\alpha} {\gamma} }D_{\alpha} D_{\gamma}  - g_{{\gamma} {\alpha} }D_{\gamma} D_{\alpha}  = &\ x_{\gamma} D_{\alpha} - x_{\alpha} D_{\gamma}, \quad {\rm and} \\
    g_{{\beta} {\gamma} }D_{\beta} D_{\gamma}  = &\ x_{\gamma} D_{\beta} ,
\end{align*}

where $g_{{\alpha} {\beta} }, g_{{\alpha} {\gamma} }, g_{{\beta} {\gamma} } \neq 0$.

\item {\em The case of} $B^{(3)}$:
 \begin{align*}
    gD_{\alpha} D_{\beta}  - (g-\Lambda)D_{\beta} D_{\alpha}  = &\ x_{\beta} D_{\alpha}  -x_{\alpha} D_{\beta} ,\\
    g_{\gamma} D_{\alpha} D_{\gamma}  = &\ -x_{\alpha} D_{\gamma}, \quad {\rm and}\\
    (g_{\gamma}  - \Lambda)D_{\beta} D_{\gamma}   = &\ - x_{\beta} D_{\gamma} ,
\end{align*}

where $g\neq 0$ and $g_{\gamma} \neq 0, \Lambda$.

\item {\em The case of} $B^{(4)}$:
\begin{align*}
    (g_{\alpha}  - \Lambda)D_{\alpha} D_{\beta}  = &\ x_{\beta} D_{\alpha} ,\\
    g_{\alpha} D_{\alpha} D_{\gamma}  = &\ x_{\gamma} D_{\alpha}, \quad {\rm and}\\
    gD_{\beta} D_{\gamma}  - (g-\Lambda)D_{\gamma} D_{\beta}  = &\ x_{\gamma} D_{\beta}  - x_{\beta} D_{\gamma} ,
\end{align*}
  where $g\neq 0$ and $g_{\alpha} \neq 0, \Lambda$.

\item {\em The case of} $C^{(1)}$: 
 \begin{align*}
    g_{\beta} D_{\alpha} D_{\beta}  - (g_{\beta}  -\Lambda)D_{\beta} D_{\alpha}  = &\ -x_{\alpha} D_{\beta} ,\\
    g_{\gamma} D_{\alpha} D_{\gamma}  - (g_{\gamma}  - \Lambda)D_{\gamma} D_{\alpha}  = &\ - x_{\alpha} D_{\gamma}, \quad {\rm and}\\
    g_{{\beta} {\gamma} }D_{\beta} D_{\gamma}  - g_{{\gamma} {\beta} }D_{\gamma} D_{\beta}  = &\ 0,
\end{align*}
where $g_{\beta} , g_{\gamma} , g_{{\beta} ,{\gamma} }\neq 0$.

\item {\em The case of} $C^{(2)}$: 
 \begin{align*}
    g_{{\alpha} {\beta} }D_{\alpha} D_{\beta}  - g_{{\beta} {\alpha} }D_{\beta} D_{\alpha}  = &\ - x_{\alpha} D_{\beta} ,\\
    g_{{\alpha} {\gamma} }D_{\alpha} D_{\gamma}  - g_{{\gamma} {\alpha} }D_{\gamma} D_{\alpha}  = &\ - x_{\alpha} D_{\gamma}, \quad {\rm and}\\
    D_{\beta} D_{\gamma}   = &\ 0,
\end{align*}

where $g_{{\alpha} {\beta} }, g_{{\alpha} {\gamma} } \neq 0$.

\item {\em The case of} $D$: With $q_{ji}: = \frac{g_{ji}}{g_{ij}}$, where $i, j\in \{\alpha, \beta, \gamma\}$ (recall that $g_{ij} \neq 0$, for $i < j$), we have that
 \begin{align*}
    D_{\alpha} D_{\beta}  - q_{{\beta} {\alpha} }D_{\beta} D_{\alpha}  = &\ 0, \\
    D_{\alpha} D_{\gamma}  - q_{{\gamma} {\alpha} }D_{\gamma} D_{\alpha}  = &\ 0, \quad {\rm and}\\
    D_{\beta} D_{\gamma}  - q_{{\gamma} {\beta} }D_{\gamma} D_{\beta}  = &\ 0.
\end{align*} 
\end{enumerate}

The division into algebras $A, B, C$ and $D$ reflects the number of coefficients $x_j, j \in \left\{ \alpha, \beta, \gamma \right\}$ being zero in the expression (\ref{PyatovTwarock2002(1)}). As it was shown above, for these algebras, none, one, two, or all three of the coefficients $x_i$ vanish, respectively.

\subsection{Diffusion algebras on \texorpdfstring{$N$}{Lg} generators}\label{DAgeneralgenerators}

Consider the following decomposition of the index set $I_N = \left\{ 1, \dotsc, N \right\}$: 
\begin{equation}\label{PyatovTwarock2002(15)(16)}
    I_N = I \cup R, \quad {\rm with} \quad I := \left\{\alpha \in I_N \mid x_{\alpha} \neq 0\right\} \ {\rm and} \ R:= \left\{ \alpha \in I_N \mid x_{\alpha} = 0 \right\}.
\end{equation}

\begin{definition}\label{PyatovTwarock2002Definitions3.1and3.2}
\begin{enumerate}
    \item \cite[Definition 3.1]{PyatovTwarock2002} Normal orderings of two generators $D_{\alpha}$ and $D_{\beta}$ is defined as
\begin{equation}\label{PyatovTwarock2002(17)}
    \left(D_{\alpha} D_{\beta}\right) := \begin{cases}
D_{\alpha} D_{\beta}, & {\rm if} \ \alpha < \beta \\ D_{\beta} D_{\alpha}, & {\rm if} \ \beta < \alpha.
    \end{cases}
\end{equation}

\item \cite[Definition 3.2]{PyatovTwarock2002} For $\alpha < \beta$, consider the following notation:
\begin{equation}\label{PyatovTwarock2002(18)}
    [D_{\alpha}, D_{\beta}]_{q_{\beta \alpha}} := D_{\alpha} D_{\beta} - q_{\beta \alpha} D_{\beta} D_{\alpha},
\end{equation}

where the index at the commutator is referring to the coefficients $q_{\beta \alpha}$ in terms of which the commutator is defined.
\end{enumerate}
\end{definition}

Considering notation in Definition \ref{PyatovTwarock2002Definitions3.1and3.2}, the set $R$ is subdivided into nonintersecting and nonempty subsets
\begin{equation}\label{PyatovTwarock2002(19)}
    R := R_1 \cup R_2 \cup \dotsb \cup R_{M_R}
\end{equation}

according to the following requirements:
\begin{itemize}
    \item Relations between generators from the sets $R_{a}$ and $R_b$ for $a\neq b$ are given by
\begin{equation}\label{PyatovTwarock2002(20)}
     :D_{r_1} D_{r_2}: = 0, \quad {\rm for\ all} \ r_1 \in R_a \ {\rm and} \ r_2 \in R_b.   
    \end{equation}

    \item Relations within a set $R_a$ such that $|R_a| \ge 2$ are given by
\begin{equation}\label{PyatovTwarock2002(21)}
        [D_{r_1}, D_{r_2}]_{q_{r_2 r_1}} = 0, \quad {\rm for\ all} \ r_1, r_2 \in R_a \ {\rm with} \ r_1 < r_2,
    \end{equation}

    where the coefficients in (\ref{PyatovTwarock2002(21)}) are subject to the condition opposite to (\ref{PyatovTwarock2002(20)}), that is: for any subdivision $R_a = R' \cup R''$ into two nonintersecting and nonempty parts $R'$ and $R''$, 
    \begin{equation}\label{PyatovTwarock2002(22)}
        {\rm there\ exists}\ r_1 \in R' \ {\rm and} \ r_2\in R''\ {\rm such\ that} \ g_{r_1 r_2} g_{r_2 r_1} \neq 0.
    \end{equation}

    This means that for any pair of indices $r, s \in R_a$ there exists a finite sequence $\left\{ r_k \in R_a \mid k = 1, \dotsc, n \right\}$ such that $r_1 = r, r_N = s$ and
\begin{equation}\label{PyatovTwarock2002(23)}
        \prod_{k = 1}^{N-1} g_{r_kr_{k+1}} g_{r_{k+1} r_k} \neq 0.
    \end{equation}

    Relations (\ref{PyatovTwarock2002(22)}) and (\ref{PyatovTwarock2002(23)}) may be represented graphically via a {\em connectivity condition} on an ordered graph the vertices of which are labeled by the indices $r\in R_a$ and the edges connect only those vertices $r_1 < r_2$ for which the condition $q_{r_2 r_1} \neq 0$ is satisfied.
\end{itemize}

    Furthermore, for $|I| \ge 2$ the set $R$ is split into two sets $S$ and $T$ as follows:

    For any $R_a \subset R$, let
\begin{equation}\label{PyatovTwarock2002(24)}
        R_a := \begin{cases}
S_a, & {\rm if} \ {\rm there\ exist} \ r\in R_a \ {\rm and} \ i\in I \ {\rm with} \ g_{ir} g_{ri} \neq 0, \\
T_a & {\rm otherwise}.
        \end{cases}
    \end{equation}

Suppose that the $M_R$ sets $R_a$ in (\ref{PyatovTwarock2002(19)}) split into $M_S$ sets $S_a$ and $M_T$ sets $T_a$. Then $M_R = M_S + M_T$. Number these sets as $S_a$ for $a = 1, \dotsc, M_S$, and $T_a$ for $a = 1, \dotsc, M_T$, and introduce
\begin{equation}\label{PyatovTwarock2002(25)}
S:= \bigcup_{a = 1}^{M_S} S_a \quad {\rm and} \quad T:= \bigcup_{a = 1}^{M_T} T_a.
\end{equation}

The decomposition of the set $S$ into subsets $S_a$ has been used in the definition of the set $S$, and this will not be consider. However, the structure of the set $T$ is crucial and needs further refinement.

For every $T_a \subset T$, let
\begin{equation}\label{PyatovTwarock2002(26)}
    T_a:= \begin{cases}
T_a^{\bullet}, \quad {\rm if} \ \exists \ i, j\in I \ {\rm with} \ i < j \ {\rm such\ that} \ T_a \subset \{ i + 1, i  + 2, \dotsc, j - 1\} \ {\rm and} \\ \quad \quad \quad \quad \quad \quad \quad \quad \quad \quad \quad \quad \quad \quad \quad \quad \quad I \cap \{ i + 1, i + 2, \dotsc, j - 1\} = \emptyset, \\
T_a^{\circ}, \quad {\rm otherwise}.
\end{cases}
\end{equation}

In this way, we write
\begin{equation}\label{Taeq}
T = \left\{ T_a^{\bullet} \mid a = 1, \dotsc, M_T^{\bullet} \right\} \bigcup \left\{ T_a^{\circ} \mid a = 1, \dotsc, M_T^{\circ} \right\} \quad {\rm with} \ M_T = M_T^{\bullet} + M_T^{\circ}.
\end{equation}

\subsubsection{List of diffusion algebras on \texorpdfstring{$N$}{Lg} generators}

From now on, the expression \textquotedblleft generators of a set $I, S, T$, or $R$\textquotedblright\ means the generators indexed by elements from the corresponding set.

\begin{definition}[{\cite[Definition 3.3]{PyatovTwarock2002}}]
A set of three generators $\left\{ D_x, D_y, D_z\right\}$ with $x \in X, y\in Y$ and $z\in Z$, where $X, Y$ and $Z$ are any of the sets $I, R, S$ and $T$ or any set in their decomposition will be called a {\em triplet} ({\em of type}) $\{X, Y, Z\}$. 
\end{definition}

As it can be seen, any triplet of type $\{I, I, I\}$ in a diffusion algebra of $N\ge 3$ generators gives rise to a diffusion algebra of type $A_I$ or $A_{II}$, any triple of type $\{I, I, R\}$ to a diffusion algebra of type $B^{(1)}, B^{(2)}, B^{(3)}$ or $B^{(4)}$, any triplet of type $\{I, R, R\}$ to a diffusion algebra of type $C^{(1)}$ or $C^{(2)}$, and any triplet of type $\{R, R, R\}$ to a diffusion algebra of type $D$.

\begin{proposition}[{\cite[Lemma 4.3]{PyatovTwarock2002}}]
For any diffusion algebra {\rm (1)} with $N \ge 3$ generators, the following statements hold:
\begin{enumerate}
    \item [\rm (1)] If $|I|\geq 3$, then all subalgebras corresponding to triplets of type $\{I, I, I\}$ are of the same type, which is either $A_I$ {\rm (}that is, $g_{ij} = g$ for all $i, j \in I${\rm )} or $A_{II}$ (that is, $g_{ji} = 0, \ g_{ij} = g_i - g_j, \ g_i \neq g_j$ for all $i < j$).

    \item [\rm (2)] If $|I| \ge 3$ and all subalgebras corresponding to triplets $\{I, I, I\}$ are of type $A_I$, then for any $s\in S$ and for all $i \in I$, we have that
    \begin{equation}\label{PyatovTwarock2002(27)}
         g_{is} = g_{si} = g_s.
    \end{equation}
   
    \item [\rm (3)] If $|I| \ge 3$ and all subalgebras corresponding to triplets $\{ I, I, I\}$ are of type $A_{II}$, then $S = \emptyset$.

    \item [\rm (4)] Let $|I| \ge 2$. For any $i\in I$ and every $t, t' \in T_a$ ($T_a$ means both $T_a^{\circ}$ and $T_a^{\bullet}$) with $t < i$ and $t' > i$, the coefficients $g_{ti}$ and $g_{it'}$ depend only on the index $a$ of the set $T_a$ and not on the individual indices $t$ or $t'$. If $t, t' \in T_a^{\circ}$ one furthermore has $g_{ti} = - g_{it'}$.

    For any $i < j$ and every $t, t' \in T_a$: $t < i$ and $t' > j$,
    \begin{equation}
        g_{ti} + \Lambda_{ij} = g_{tj} \quad {\rm and} \quad g_{it'} = g_{jt'} + \Lambda_{ij}, \quad {\rm where}\ \Lambda_{ij} := g_{ij} - g_{ji}.
    \end{equation}

    \item [\rm (5)] Let $|I| = 1$. Denote the only index in $I$ as ${\bf i}$ in order to stress that it is not a running index. For all $r\ in R_a$ one has
    \begin{equation}\label{I1}
        g_{{\bf i}r}-g_{r{\bf i}}=\Lambda_a.
    \end{equation}
    Note that both the left- and the right-hand sides of Relation {\em (}\ref{I1}{\em )} depend only on the index $a$ of the set $R_a$ and not on the individual index $r$.
\end{enumerate}
\end{proposition}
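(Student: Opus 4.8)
The unifying idea is that the whole statement is \emph{local}: by the Diamond Lemma criterion recalled in Section~\ref{DAPreliminaries}, any triplet $\{D_x,D_y,D_z\}$ of generators of a diffusion algebra with $N\geq 3$ generators is, with its induced ordering, itself a three-generator diffusion algebra, and hence coincides with one of the cases $A_I$, $A_{II}$, $B^{(1)}$--$B^{(4)}$, $C^{(1)}$, $C^{(2)}$, $D$ listed in Section~\ref{DAThreegenerators} (up to the obvious symmetry permuting the three generators). Moreover the case is dictated by how many of $x_x,x_y,x_z$ vanish, i.e. by how many of $x,y,z$ lie in $R$: a $\{I,I,I\}$-triplet is of type $A$, a $\{I,I,R\}$-triplet of type $B$, a $\{I,R,R\}$-triplet of type $C$, and a $\{R,R,R\}$-triplet of type $D$. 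Each of these cases comes with explicit linear relations among the structure constants $g_{uv}$ and the parameter $\Lambda$; the proof consists of writing these down for cleverly chosen triplets and then matching the constants attached to the (few) pairs, or single generators, that two such triplets share.

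For (1), a $\{I,I,I\}$-triplet is of type $A_I$ or $A_{II}$, and on each pair $\{i,j\}$ with $i<j$ the two possibilities are distinguished by the value of $q_{ji}=g_{ji}/g_{ij}$: it equals $1$ in case $A_I$ (where $g_{ij}=g_{ji}=g\neq 0$) and $0$ in case $A_{II}$ (where $g_{ji}=0$ and $g_{ij}=g_i-g_j\neq 0$). Hence two $\{I,I,I\}$-triplets sharing a pair must be of the same type. When $|I|=3$ there is only one such triplet and nothing to prove; when $|I|\geq 4$, the graph whose vertices are the $3$-subsets of $I$ and whose edges join subsets sharing two elements is connected, so all $\{I,I,I\}$-triplets have the same type, and in the $A_I$ case the common value $g$ likewise propagates along this graph.

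For (2) and (3) fix $s\in S$ and $i_0\in I$ with $g_{i_0s}g_{si_0}\neq 0$, choose two further elements of $I$ (possible since $|I|\geq 3$), and examine the $\{I,I,R\}$-triplets they form with $s$ together with the $\{I,I,I\}$-triplet they form among themselves. In every such $B$-triplet the relation on the $I$-pair has the shape $g\,D_uD_v-(g-\Lambda)D_vD_u=\ast$ (or $g_{uv}D_uD_v-g_{vu}D_vD_u=\ast$ in $B^{(2)}$). If all $\{I,I,I\}$-triplets are of type $A_I$, demanding that this $I$-pair relation be symmetric forces $\Lambda=0$, after which the relations linking $s$ to the $I$-generators collapse to $g_{is}=g_{si}=g_s$ for a single constant $g_s$, independent of $i$ by propagation along triplets sharing $s$ and one $I$-generator; together with $g_{i_0s}\neq 0$ this is (2). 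If instead all $\{I,I,I\}$-triplets are of type $A_{II}$, then whatever the position of $s$ relative to the chosen $I$-indices the corresponding $B$-relations force $g_{is}g_{si}=0$ for \emph{every} $i\in I$, contradicting $s\in S$; hence $S=\emptyset$, which is (3).

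For (4) and (5) one feeds the extremal and the $C$-type triplets into the same machine. A triplet $\{t,i,j\}$ with $t\in T_a$, $i<j$ in $I$ and $t<i$ is an $\{I,I,R\}$-triplet whose $R$-index is the smallest, hence of type $B^{(4)}$, from which $g_{ti}+\Lambda_{ij}=g_{tj}$ is read off directly; symmetrically, $\{i,j,t'\}$ with $t'>j$ is of type $B^{(3)}$ and yields $g_{it'}=g_{jt'}+\Lambda_{ij}$. The ``depends only on $a$'' assertions, and $g_{ti}=-g_{it'}$ for $t,t'\in T_a^{\circ}$, come from plugging pairs $t,t'\in T_a$ joined by an edge of the connectivity graph of (\ref{PyatovTwarock2002(22)})--(\ref{PyatovTwarock2002(23)}) into the appropriate $C$-type triplet with a fixed $i\in I$, using that membership in $T$ forces one coefficient of each mixed pair to vanish while the two mixed commutation relations of a $C$-triplet share one common parameter, and then chaining along that graph; the defining property of $T_a^{\bullet}$ in (\ref{PyatovTwarock2002(26)}) is exactly what guarantees that all of $T_a^{\bullet}$ lies on one side of any such $i$. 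Finally (5) is the case $|I|=1$: for $|R_a|\geq 2$ every triplet $\{\mathbf{i},r,r'\}$ with $r,r'\in R_a$ is of type $C^{(1)}$ or $C^{(2)}$, whose common parameter gives $g_{\mathbf{i}r}-g_{r\mathbf{i}}=g_{\mathbf{i}r'}-g_{r'\mathbf{i}}$, and chaining this along a connectivity path inside $R_a$ (which exists by (\ref{PyatovTwarock2002(23)})) shows the common value $\Lambda_a$ depends only on $a$; for $|R_a|=1$ the equality is simply the definition of $\Lambda_a$. The main obstacle throughout is the case bookkeeping hidden in these phrases: in (3) one must run the argument through every relative position of $s$ with respect to the three chosen $I$-indices (below all of them, strictly between two consecutive ones, or above all of them), and in (4) one must verify that the variant of the $C$-cases arising when the $I$-generator is not the smallest of the triplet still exhibits the shared parameter needed for the conclusion; both are routine once the right triplets are singled out, but making the position analysis exhaustive is where the real work lies.
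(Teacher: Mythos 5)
The paper itself offers no proof of this proposition: it is imported from Pyatov and Twarock as a black box (their Lemma 4.3), so there is no in-paper argument to compare yours against. Your reconstruction follows what is in fact the strategy of the original source: every triplet of generators is itself a three-generator diffusion algebra by the Diamond Lemma criterion, hence coincides with one of the listed types, and the global constraints are obtained by matching structure constants on pairs shared by overlapping triplets and chaining along the relevant connectivity graphs. Parts (1), (4) and (5) of your sketch are correct as written, including the identification of $B^{(4)}$ (resp.\ $B^{(3)}$) as the type of an $\{I,I,R\}$-triplet whose $R$-index is smallest (resp.\ largest), from which the two displayed identities in (4) are read off directly.

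The one step you assert rather than prove is the core of (3): that in the $A_{II}$ regime ``the corresponding $B$-relations force $g_{is}g_{si}=0$ for every $i\in I$.'' A single $B$-triplet does not do this. Inspecting the four $B$-cases, the only one in which an $(I,R)$-pair can have $g_{ir}g_{ri}\neq 0$ is $B^{(1)}$, where the $R$-index $r$ sits strictly between the two $I$-indices $i<j$ and $g_{ir}=g_r$, $g_{ri}=g_r-\Lambda_{ij}$ with $\Lambda_{ij}=g_i-g_j$; both coefficients are permitted to be nonzero there. To reach a contradiction you must note that in $B^{(1)}$ the \emph{other} $(I,R)$-pair of the triplet then also has nonzero product, so a second overlapping triplet $\{k,r,j\}$ (or $\{i,r,k\}$) with a third index $k\in I$ is again forced to be of type $B^{(1)}$, and comparing the two resulting expressions for $g_{jr}$ (or $g_{ri}$) yields $g_i=g_k$, contradicting the $A_{II}$ condition that the $g_i$ are pairwise distinct; the extremal positions of $r$ are disposed of separately by $B^{(3)}$/$B^{(4)}$, which kill the product outright. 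This is exactly the ``position analysis'' you defer to at the end, but for (3) it is not routine bookkeeping --- it is the entire content of the claim, and without it $S=\emptyset$ does not follow. A similar caveat applies to the first half of (4): the $C$-cases recalled in Section 2.1 of the paper only cover the configuration with the $I$-index smallest, so the variant you need (the $I$-index in the middle, giving $g_{ti}=-g_{it'}$ on $T_a^{\circ}$) has to be derived from the reduction-uniqueness conditions, not looked up in the list.
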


Diffusion algebras with $N$ generators are listed as five families of algebras: $A_I, A_{II}, B, C$ and $D$. As in the case of $N = 3$ the number of nonzero coefficients $x_{\alpha}$, or equivalently, the cardinality of the set $I$ is used as a criterion for separating diffusion algebras into families of the types $A (|I| \ge 3)$, $B (|I| \ge 2)$, $C (|I| = 1)$ or $D (|I| = 0)$. Type $A$ algebras are separated further into two families $A_I$ and $A_{II}$ depending on the number of nonzero coefficients $g_{ij}$ with indices $i, j$ in the set $I$.

Next, we will see that different algebras in the families are obtained in dependence on the choice of the decomposition of the set $I_N = \{ 1, 2, \dotsc, N\}$ into {\em ordered subsets} $I, S, T_a^{\rm \circ}\ (a = 1, \dotsc, M_T^{\circ}), T_b^{\bullet} \ (b = 1, \dotsc, M_T^{\bullet})$ (or $R_a$, $a = 1, \dotsc, M_R$ for $N_I = 1$) as well as on the choice of coefficients in their defining relations. Next, we consider a notation for diffusion algebras where the corresponding decomposition of the set $I_N$ is given explicitly as argument to the family symbol. The subscript indices $a$ and $b$ in our notation are treated as running ones so that, e.g.
\[
A_I(I, S, T_a^{\circ}, T_b^{\bullet}) \equiv A_I (I, S, T_a^{\circ}, \dotsc, T_{M_T^{\circ}}^{\circ}, T_b^{\bullet}, \dotsc, T_{M_T^{\bullet}}^{\bullet}),
\]

where $I_N = I \cup S \cup \left( \bigcup_{a = 1}^{M_T^{\circ}} T_a^{\circ} \right) \cup \left( \bigcup_{b = 1}^{M_T^{\bullet}} T_a^{\bullet} \right)$, and $I, S, T_a^{\circ}$ and $T_b^{\bullet}$ are mutually nonintersecting ordered subsets in $I_N$. Notice that the values of the coefficients $g_{\alpha \beta}$ are not shown explicitly in these notations, so that this notation displays connective components in a variety of diffusion algebras rather than the particular algebras.

Next, relations in (\ref{PyatovTwarock2002(31)}) - (\ref{PyatovTwarock2002(35)}) below are to be complemented by relations (\ref{PyatovTwarock2002(20)}), (\ref{PyatovTwarock2002(21)}) for the elements of the subset $R$ together with the conditions (\ref{PyatovTwarock2002(22)}) or (\ref{PyatovTwarock2002(23)}) on the coefficients involved.

\begin{proposition}[{\cite[Theorem 3.5]{PyatovTwarock2002}}]\label{PyatovTwarock2002Theorem3.5}
    The following list contains all possible diffusion algebras with $N$ generators:
\begin{enumerate}
    \item [\rm (1)] Diffusion algebras of type $A_I (I, S, T_a^{\circ}, T_b^{\bullet})$ with $|I| \ge 3$:
    \begin{align}
        gD_iD_j - gD_jD_i = &\ x_jD_i - x_iD_j, \quad\ {\rm for\ all}\ i, j \in I, \notag \\
    g_sD_s D_i - g_sD_iD_s = &\ x_i D_s, \quad {\rm for\ all} \ s\in S, i\in I, \notag \\
    g_a^{\circ} : D_i D_t := &\ -x_iD_t, \quad {\rm for \ all} \ a, t \in T_a^{\circ}, \ i \in I, \label{PyatovTwarock2002(31)} \\
    g_b^{+} D_i D_t = &\ -x_i D_t, \quad {\rm for\ all} \ b, t \in T_b^{\bullet}, \ {\rm and\ every}\ i < t, \notag \\
    g_b^{-} D_t D_i = &\ x_i D_t, \quad {\rm for\ all} \ b, t \in T_b^{\bullet},\ {\rm and\ every}\ i > t, \notag
    \end{align}

    where $g, g_s, g_a^{\circ}, g_{b}^{\pm} \neq 0$.

    \item [\rm (2)] Diffusion algebras of type $A_{II} (I, T_a^{\circ}, T_b^{\bullet}), \ |I| \ge 3$:
    \begin{align}
        (g_i - g_j) D_i D_j = &\ x_j D_i - x_iD_j, \quad {\rm for\ all}\ i < j, \notag \\
        (g_i + g_a^{\circ}): D_i D_t := &\ -x_i D_t, \quad {\rm for\ all}\ a, t \in T_a^{\circ}, \ i < t,  \notag \\
         (g_i + g_a^{+})  D_iD_t = &\ -x_iD_t, \quad {\rm for\ all}\ b, t \in T_b^{\bullet}, \ i < t, \label{PyatovTwarock2002(32)} \\
          (g_b^{-} - g_i): D_t D_i : = &\ x_i D_t, \quad {\rm for\ all}\ b, t \in T_b^{\bullet}, \ i > t, \notag 
    \end{align}

where $g_i \neq g_j$ for $i \neq j$ and $g_i \notin \{g_a^{\circ}, \pm g_b^{\pm}\}$.

\item [\rm (3)] Diffusion algebras of type $B (I = \{{\bf i}, {\bf j}\}, S, T_a^{\circ}, T_b^{\bullet})$:
We use the notation ${\bf i}$ and ${\bf j}$ with ${\bf i} < {\bf j}$ for the two elements of the set $I$ to emphasize that they are not running indices. Note that ${\bf i} < t < {\bf j}$ for all $t \in T_b^{\bullet}$ in this case.
\begin{align}
    gD_{\bf i} D_{\bf j} - (g - \Lambda) D_{\bf j} D_{\bf i} = &\ x_{\bf j} D_{\bf i} - x_{\bf i} D_{\bf j}, \notag \\
    g_s D_{\bf i} D_s - (g_s - \Lambda) D_s D_{\bf i} = &\ -x_{\bf i} D_s, \quad {\rm for\ all}\ s\in S, \notag \\
    g_s D_s D_{\bf j} - (g_s - \Lambda)D_{\bf j} D_s = &\ x_{\bf j} D_s, \quad {\rm for\ all} \ s\in S, \notag \\
    g_a^{\circ}: D_{\bf i} D_t := &\ - x_{\bf i} D_t, \quad {\rm for\ all} \ t \in T_a^{\circ}, \label{PyatovTwarock2002(33)} \\
    (g_a^{\circ} - \Lambda) : D_{\bf j} D_t := &\ -x_{\bf j} D_t, \quad {\rm for\ all}\ t\in T_a^{\circ}, \notag \\
    g_b^{+} D_{\bf i} D_t = &\ -x_{\bf i} D_t, \quad {\rm for\ all}\ t \in T_b^{\bullet}, \notag \\
    g_b^{-}D_t D_{\bf j} = &\ x_{\bf j} D_t, \quad {\rm for\ all} \ t \in T_b^{\bullet}, \notag
\end{align}

where $g\neq 0, \ g_s \neq 0$ for all $s$ and $g_s \neq \Lambda$ for $s$ such that either $s < {\bf i}$ or $s > {\bf j}$, $g_a^{\circ} \notin \{0, \Lambda\}$ and $g_b^{\pm} \neq 0$.

\item [\rm (4)] Diffusion algebras of type $C (I = \{{\bf i}\}, R_a)$:
The only element of $I$ is denoted by ${\bf i}$, and hence
\begin{equation}\label{PyatovTwarock2002(34)}
    g_r D_{\bf i} D_r - (g_r - \Lambda_a) D_r D_{\bf i} = -x_{\bf i}D_r, \quad {\rm for\ all} \ r\in R_a,
\end{equation}

where $g_r \neq 0$ for $r < {\bf i}$, and $g_r \neq \Lambda_a$ for $r > {\bf i}$.

\item [\rm (5)] Diffusion algebras of type $D (R)$:
\begin{equation}\label{PyatovTwarock2002(35)}
D_r D_s - q_{sr} D_s D_r = 0, \quad {\rm for \ all} \ r, s \in R \ {\rm with}\ r < s.
\end{equation}
\end{enumerate}
\end{proposition}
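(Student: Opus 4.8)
The plan is to combine Bergman's Diamond Lemma \cite{Bergman1978} with the list of three-generator diffusion algebras recalled in Section \ref{DAThreegenerators} and the structural constraints of \cite[Lemma 4.3]{PyatovTwarock2002} quoted above. First I would note that for an algebra presented by relations of the form \eqref{PyatovTwarock2002(1)}, the only overlap ambiguities in the sense of \cite{Bergman1978} come from the degree-three monomials $D_\alpha D_\beta D_\gamma$ with $\alpha<\beta<\gamma$, reduced to $D_\gamma D_\beta D_\alpha$ along the two routes \eqref{PyatovTwarock2002(4)} and \eqref{PyatovTwarock2002(5)}; there are no inclusion ambiguities since all leading monomials are quadratic and distinct. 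Hence the existence of the PBW basis \eqref{PyatovTwarock2002(2)} is equivalent to requiring that the subalgebra generated by each such triple be reduction-unique, i.e. be one of the nine three-generator diffusion algebras $A_I$, $A_{II}$, $B^{(1)},\dots,B^{(4)}$, $C^{(1)}$, $C^{(2)}$, $D$. The classification therefore reduces to the combinatorial task of deciding which global assignments of a three-generator type to every triple are mutually compatible, and then reading off the resulting relations — this is exactly what the blending procedure of Section \ref{DAgeneralgenerators} organizes, and one may equivalently run it as an induction on $N$, adjoining one generator at a time and imposing the constraints coming from all triples that contain it.

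Next I would introduce the partition $I_N = I\cup R$ of \eqref{PyatovTwarock2002(15)(16)} according to whether $x_\alpha$ vanishes. Inspecting the nine cases shows that the three-generator type of a triple is governed by how many of its three $x$-coefficients vanish: none forces type $A$, exactly one forces type $B$, exactly two forces type $C$, all three force type $D$. Consequently the global family is dictated by $|I|$, giving $A$ for $|I|\ge 3$, $B$ for $|I|=2$, $C$ for $|I|=1$, and $D$ for $|I|=0$. In the type-$A$ case, applying the constraints of \cite[Lemma 4.3]{PyatovTwarock2002} to the triples of type $\{I,I,I\}$ forces all of them to be uniformly of type $A_I$ (so $g_{ij}=g$ for all $i,j\in I$) or uniformly of type $A_{II}$ (so $g_{ji}=0$ and $g_{ij}=g_i-g_j$), and in the $A_{II}$ subcase one further obtains $S=\emptyset$.

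I would then analyze how the generators indexed by $R$ attach. Relations between two $R$-generators come from triples of type $\{I,R,R\}$ or $\{R,R,R\}$, and the three-generator list forces them to have the shape \eqref{PyatovTwarock2002(20)}--\eqref{PyatovTwarock2002(21)}; the impossibility of splitting a block without violating \eqref{PyatovTwarock2002(22)} is precisely the connectivity condition, yielding $R=R_1\cup\cdots\cup R_{M_R}$. For $|I|\ge 2$, a triple of type $\{I,I,R\}$ detects whether a block interacts with $I$ or not, which produces the split $R=S\cup T$ of \eqref{PyatovTwarock2002(24)}--\eqref{PyatovTwarock2002(25)}; and a triple $\{D_i,D_j,D_t\}$ with $i<t<j$ and $i,j$ consecutive in $I$ falls into a different three-generator case than the generic situation, which forces the dichotomy $T_a^{\circ}$ versus $T_a^{\bullet}$ of \eqref{PyatovTwarock2002(26)}. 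The coefficient identities of \cite[Lemma 4.3]{PyatovTwarock2002} — for instance $g_{is}=g_{si}=g_s$ on $S$ in the $A_I$ case, the dependence of $g_{ti}$ and $g_{it'}$ only on the block index, and $g_{ti}=-g_{it'}$ on $T_a^{\circ}$ — then collapse every relation involving an $R$-generator to the single parameter attached to its block, producing exactly \eqref{PyatovTwarock2002(31)}--\eqref{PyatovTwarock2002(35)}. For the converse one checks that in each algebra on the list every triple $\{D_\alpha,D_\beta,D_\gamma\}$ is indeed one of the nine three-generator diffusion algebras, so the Diamond Lemma guarantees that the ordered monomials \eqref{PyatovTwarock2002(2)} form a basis and the algebra is genuinely a diffusion algebra.

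The main obstacle is the forward combinatorial analysis: one must propagate the pairwise compatibility constraints through overlapping triples to show that the local data really do organize themselves into the blocks $I$, $S$, $T_a^{\circ}$, $T_b^{\bullet}$ (respectively $R_a$ when $|I|\le 1$) with all coefficients depending only on the block, and in particular one must justify why a block of $R$ that is order-sandwiched strictly between two consecutive elements of $I$ must be treated separately as a $T^{\bullet}$-block. Keeping this bookkeeping consistent across the $B$-family, where $|I|=2$ and the two elements of $I$ bracket every $T^{\bullet}$-block, is the most delicate point.
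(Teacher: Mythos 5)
The paper offers no proof of this proposition---it is recalled verbatim from Pyatov and Twarock \cite[Theorem 3.5]{PyatovTwarock2002}---and its preliminaries already outline exactly the strategy you sketch: reduce the PBW condition to reduction-uniqueness of each triple of generators via Bergman's Diamond Lemma, classify triples by how many of their $x$-coefficients vanish, and propagate the coefficient constraints of \cite[Lemma 4.3]{PyatovTwarock2002} through the blending procedure to obtain the blocks $I$, $S$, $T_a^{\circ}$, $T_b^{\bullet}$. Your proposal is therefore the same approach as the cited source and its skeleton is correct, but the combinatorial propagation you explicitly defer (showing the local triple data globally organize into blocks with block-dependent coefficients) is the substantive content of the original proof, so as written this remains an accurate outline rather than a complete argument.
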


\begin{remark}\label{remarkdifalg}
\begin{enumerate}
    \item [\rm (i)] Hinchcliffe in his PhD thesis \cite[Definition 2.1.1]{Hinchcliffe2005} considered the following notation for diffusion algebras. Let $R$ be the algebra generated by $n$ indeterminates $x_1, x_2$, $\dotsc$, $x_n$ over $\mathbb{C}$ subject to relations 
    \[
    a_{ij}x_ix_j - b_{ij}x_jx_i = r_jx_i - r_ix_j
    \]
    
    whenever $i < j$, for some parameters $a_{ij}\in \mathbb{C}\ \backslash\ \{0\}$, for all $i < j$ and $b_{ij}, r_i \in \mathbb{C}$, for all $i < j$. He defined the {\em standard monomials} to be those of the form $x_n^{I_N}x_{n-1}^{i_{n-1}}\dotsb x_2^{i_2}x_1^{i_1}$. $R$ is called a {\em diffusion algebra} if it admits a {\em PBW} {\em basis of these standard monomials}. In other words, $R$ is a diffusion algebra if these standard monomials are a $\mathbb{C}$-vector space basis for $R$. If all the elements $q_{ij}:= \frac{b_{ij}}{a_{ij}}$'s are non-zero, then the diffusion algebras have a PBW basis in any order of the indeterminates \cite[Remark 2.1.6]{Hinchcliffe2005}.
    
From his definition, a diffusion algebra generated by $n$ indeterminates has Gelfand-Kirillov dimension $n$ since because of the PBW basis, the vector subspace consisting of elements of total degree at most $l$ is isomorphic to that of a commutative polynomial ring in $n$ indeterminates. Notice that a diffusion algebra in one indeterminate is precisely a commutative polynomial ring in one indeterminate. A diffusion algebra with $x_t = 0$, for all $t = 1,\dotsc, n$, is a {\em multiparameter quantum affine} $n-${\em space}.

\item [\rm (ii)] Fajardo et al. \cite[Section 2.4]{Fajardoetal2020} studied ring-theoretical properties of a graded version of these algebras. The \textit{diffusion algebras type 2} are affine algebras $\mathcal{D}$ generated by $2n$ variables $\{D_1,\dotsc ,D_n,x_1,\dotsc ,x_n\}$ over $\Bbbk$ that admit a linear PBW basis of ordered monomials of the form $B_{\alpha_{1}}^{k_{1}}B_{\alpha_{2}}^{k_{2}}\cdots B_{\alpha_{n}}^{k_{n}}$ with $B_{\alpha_{i}}\in \{D_1,\dotsc ,D_n,x_1,\dotsc ,x_n\}$, for all $i\leq 2n$, $k_{j}\in\mathbb{N}$, and $\alpha_1>\alpha_2>\cdots > \alpha_n$, such that for all $1\leq i<j\leq n$, there exist elements $\lambda_{ij}\in \Bbbk^*$ satisfying the relations
\begin{equation}\label{equationsrule}
    \lambda_{ij}D_iD_j-\lambda_{ji}D_jD_i=x_jD_i-x_iD_j.
\end{equation}

Following Krebs and Sandow \cite{KrebsSandow1997}, the relations (\ref{equationsrule}) are consequence of subtracting (quadratic) operator relations of the type
\begin{equation*}\label{Prediff}
\Gamma_{\gamma\delta}^{\alpha\beta}D_{\alpha}D_{\beta}=D_{\gamma}X_{\delta}-X_{\gamma}D_{\delta},\ \ \text{for all}\ \ \gamma,\delta=0,1,\dotsc ,n-1,
\end{equation*}
 where $\Gamma_{\gamma\delta}^{\alpha\beta}\in \Bbbk$, and $D_{i}$'s and $X_{j}$'s are operators of a particular vector space such that not necessarily $[D_{i},X_{j}]=0$ holds \cite[p. 3168]{KrebsSandow1997}.
\end{enumerate}
\end{remark}

\section{Differential smoothness of algebras}\label{DefinitionsandpreliminariesDSA} 

We follow Brzezi\'nski and Sitarz's presentation on differential smoothness carried out in \cite[Section 2]{BrzezinskiSitarz2017} (c.f. \cite{Brzezinski2008, Brzezinski2014}).

\begin{definition}[{\cite[Section 2.1]{BrzezinskiSitarz2017}}]
\begin{enumerate}
    \item [\rm (i)] A {\em differential graded algebra} is a non-negatively graded algebra $\Omega$ with the product denoted by $\wedge$ together with a degree-one linear map 
    $$
    d:\Omega^{\bullet} \to \Omega^{\bullet +1}
    $$ 
    
    that satisfies the graded Leibniz rule and is such that $d \circ d = 0$. 
    
    \item [\rm (ii)] A differential graded algebra $(\Omega, d)$ is a {\em calculus over an algebra} $A$ if $\Omega^0 A = A$ and $\Omega^n A = A\ dA \wedge dA \wedge \dotsb \wedge dA$ ($dA$ appears $n$-times) for all $n\in \mathbb{N}$ (this last is called the {\em density condition}). We write $(\Omega A, d)$ with 
    $$
    \Omega A = \bigoplus_{n\in \mathbb{N}} \Omega^{n}A.
    $$
    
    By using the Leibniz rule, it follows that $\Omega^n A = dA \wedge dA \wedge \dotsb \wedge dA\ A$. A differential calculus $\Omega A$ is said to be {\em connected} if ${\rm ker}(d\mid_{\Omega^0 A}) = \Bbbk$.
    
    \item [\rm (iii)] A calculus $(\Omega A, d)$ is said to have {\em dimension} $n$ if $\Omega^n A\neq 0$ and $\Omega^m A = 0$ for all $m > n$. An $n$-dimensional calculus $\Omega A$ {\em admits a volume form} if $\Omega^n A$ is isomorphic to $A$ as a left and right $A$-module. 
\end{enumerate}
\end{definition}

The existence of a right $A$-module isomorphism means that there is a free generator, say $\omega$, of $\Omega^n A$ (as a right $A$-module), i.e. $\omega \in \Omega^n A$, such that all elements of $\Omega^n A$ can be uniquely expressed as $\omega a$ with $a \in A$. If $\omega$ is also a free generator of $\Omega^n A$ as a left $A$-module, this is said to be a {\em volume form} on $\Omega A$.

The right $A$-module isomorphism $\Omega^n A \to A$ corresponding to a volume form $\omega$ is denoted by $\pi_{\omega}$, i.e.
\begin{equation}\label{BrzezinskiSitarz2017(2.1)}
\pi_{\omega} (\omega a) = a, \quad {\rm for\ all}\ a\in A.
\end{equation}

By using that $\Omega^n A$ is also isomorphic to $A$ as a left $A$-module, any free generator $\omega $ induces an algebra endomorphism $\nu_{\omega}$ of $A$ by the formula
\begin{equation}\label{BrzezinskiSitarz2017(2.2)}
    a \omega = \omega \nu_{\omega} (a).
\end{equation}

Note that if $\omega$ is a volume form, then $\nu_{\omega}$ is an algebra automorphism.

Now, we proceed to recall the key ingredients of the {\em integral calculus} on $A$ as dual to its differential calculus. For more details, see Brzezinski et al. \cite{Brzezinski2008, BrzezinskiElKaoutitLomp2010}.

Let $(\Omega A, d)$ be a differential calculus on $A$. The space of $n$-forms $\Omega^n A$ is an $A$-bimodule. Consider $\mathcal{I}_{n}A$ the right dual of $\Omega^{n}A$, the space of all right $A$-linear maps $\Omega^{n}A\rightarrow A$, that is, $\mathcal{I}_{n}A := {\rm Hom}_{A}(\Omega^{n}(A),A)$. Notice that each of the $\mathcal{I}_{n}A$ is an $A$-bimodule with the actions
\begin{align*}
    (a\cdot\phi\cdot b)(\omega)=a\phi(b\omega),\quad {\rm for\ all}\ \phi \in \mathcal{I}_{n}A,\ \omega \in \Omega^{n}A\ {\rm and}\ a,b \in A.
\end{align*}

The direct sum of all the $\mathcal{I}_{n}A$, that is, $\mathcal{I}A = \bigoplus\limits_{n} \mathcal{I}_n A$, is a right $\Omega A$-module with action given by
\begin{align}\label{BrzezinskiSitarz2017(2.3)}
    (\phi\cdot\omega)(\omega')=\phi(\omega\wedge\omega'),\quad {\rm for\ all}\ \phi\in\mathcal{I}_{n + m}A, \ \omega\in \Omega^{n}A \ {\rm and} \ \omega' \in \Omega^{m}A.
\end{align}

\begin{definition}[{\cite[Definition 2.1]{Brzezinski2008}}]
A {\em divergence} (also called {\em hom-connection}) on $A$ is a linear map $\nabla: \mathcal{I}_1 A \to A$ such that
\begin{equation}\label{BrzezinskiSitarz2017(2.4)}
    \nabla(\phi \cdot a) = \nabla(\phi) a + \phi(da), \quad {\rm for\ all}\ \phi \in \mathcal{I}_1 A \ {\rm and} \ a \in A.
\end{equation}  
\end{definition}

Note that a divergence can be extended to the whole of $\mathcal{I}A$, 
\[
\nabla_n: \mathcal{I}_{n+1} A \to \mathcal{I}_{n} A,
\]

by considering
\begin{equation}\label{BrzezinskiSitarz2017(2.5)}
\nabla_n(\phi)(\omega) = \nabla(\phi \cdot \omega) + (-1)^{n+1} \phi(d \omega), \quad {\rm for\ all}\ \phi \in \mathcal{I}_{n+1}(A)\ {\rm and} \ \omega \in \Omega^n A.
\end{equation}

By putting together (\ref{BrzezinskiSitarz2017(2.4)}) and (\ref{BrzezinskiSitarz2017(2.5)}), we get the Leibniz rule 
\begin{equation}
    \nabla_n(\phi \cdot \omega) = \nabla_{m + n}(\phi) \cdot \omega + (-1)^{m + n} \phi \cdot d\omega,
\end{equation}

for all elements $\phi \in \mathcal{I}_{m + n + 1} A$ and $\omega \in \Omega^m A$ \cite[Lemma 3.2]{Brzezinski2008}. In the case $n = 0$, if ${\rm Hom}_A(A, M)$ is canonically identified with $M$, then $\nabla_0$ reduces to the classical Leibniz rule.

\begin{definition}[{\cite[Definition 3.4]{Brzezinski2008}}]
The right $A$-module map 
$$
F = \nabla_0 \circ \nabla_1: {\rm Hom}_A(\Omega^{2} A, M) \to M
$$ is called a {\em curvature} of a hom-connection $(M, \nabla_0)$. $(M, \nabla_0)$ is said to be {\em flat} if its curvature is the zero map, that is, if $\nabla \circ \nabla_1 = 0$. This condition implies that $\nabla_n \circ \nabla_{n+1} = 0$ for all $n\in \mathbb{N}$.
\end{definition}

$\mathcal{I} A$ together with the $\nabla_n$ form a chain complex called the {\em complex of integral forms} over $A$. The cokernel map of $\nabla$, that is, 
$$
\Lambda: A \to {\rm Coker} \nabla = A / {\rm Im} \nabla
$$ 

is said to be the {\em integral on $A$ associated to} $\mathcal{I}A$.

Given a left $A$-module $X$ with action $a\cdot x$, for all $a\in A,\ x \in X$, and an algebra automorphism $\nu$ of $A$, the notation $^{\nu}X$ stands for $X$ with the $A$-module structure twisted by $\nu$, i.e. with the $A$-action $a\otimes x \mapsto \nu(a)\cdot x $.

The following definition of an \textit{integrable differential calculus} seeks to portray a version of Hodge star isomorphisms between the complex of differential forms of a differentiable manifold and a complex of dual modules of it \cite[p. 112]{Brzezinski2015}. 

\begin{definition}[{\cite[Definition 2.1]{BrzezinskiSitarz2017}}]
An $n$-dimensional differential calculus $(\Omega A, d)$ is said to be {\em integrable} if $(\Omega A, d)$ admits a complex of integral forms $(\mathcal{I}A, \nabla)$ for which there exist an algebra automorphism $\nu$ of $A$ and $A$-bimodule isomorphisms 
$$
\Theta_k: \Omega^{k} A \to ^{\nu} \mathcal{I}_{n-k}A, \quad k = 0, \dotsc, n
$$

rendering commmutative the following diagram:
\[
\begin{tikzcd}
A \arrow{r}{d} \arrow{d}{\Theta_0} & \Omega^{1} A \arrow{d}{\Theta_1} \arrow{r}{d} & \Omega^2 A  \arrow{d}{\Theta_2} \arrow{r}{d} & \dotsb \arrow{r}{d} & \Omega^{n-1} A \arrow{d}{\Theta_{n-1}} \arrow{r}{d} & \Omega^n A  \arrow{d}{\Theta_n} \\ ^{\nu} \mathcal{I}_n A \arrow[swap]{r}{\nabla_{n-1}} & ^{\nu} \mathcal{I}_{n-1} A \arrow[swap]{r}{\nabla_{n-2}} & ^{\nu} \mathcal{I}_{n-2} A \arrow[swap]{r}{\nabla_{n-3}} & \dotsb \arrow[swap]{r}{\nabla_{1}} & ^{\nu} \mathcal{I}_{1} A \arrow[swap]{r}{\nabla} & ^{\nu} A
\end{tikzcd}
\]

The $n$-form $\omega:= \Theta_n^{-1}(1)\in \Omega^n A$ is called an {\em integrating volume form}. 
\end{definition}

The algebra of complex matrices $M_n(\mathbb{C})$ with the $n$-dimensional calculus generated by derivations presented by Dubois-Violette et al. \cite{DuboisViolette1988, DuboisVioletteKernerMadore1990}, the quantum group $SU_q(2)$ with the three-dimensional left covariant calculus developed by Woronowicz \cite{Woronowicz1987} and the quantum standard sphere with the restriction of the above calculus, are examples of algebras admitting integrable calculi. For more details on the subject, see Brzezi\'nski et al. \cite{BrzezinskiElKaoutitLomp2010}. 

The following proposition shows that the integrability of a differential calculus can be defined without explicit reference to integral forms. This allows us to guarantee the integrability by considering the existence of finitely generator elements that allow to determine left and right components of any homogeneous element of $\Omega(A)$.

\begin{proposition}[{\cite[Theorem 2.2]{BrzezinskiSitarz2017}}]\label{integrableequiva} 
Let $(\Omega A, d)$ be an $n$-dimensional differential calculus over an algebra $A$. The following assertions are equivalent:
\begin{enumerate}
    \item [\rm (1)] $(\Omega A, d)$ is an integrable differential calculus.
    
    \item [\rm (2)] There exists an algebra automorphism $\nu$ of $A$ and $A$-bimodule isomorphisms 
    $$
    \Theta_k : \Omega^k A \rightarrow \ ^{\nu}\mathcal{I}_{n-k}A, \quad k = 0, \ldots, n
    $$

    such that, for all $\omega'\in \Omega^k A$ and $\omega''\in \Omega^mA$,
    \begin{align*}
        \Theta_{k+m}(\omega'\wedge\omega'')=(-1)^{(n-1)m}\Theta_k(\omega')\cdot\omega''.
    \end{align*}
    
    \item [\rm (3)] There exists an algebra automorphism $\nu$ of $A$ and an $A$-bimodule map $\vartheta:\Omega^nA\rightarrow\ ^{\nu}A$ such that all left multiplication maps
    \begin{align*}
    \ell_{\vartheta}^{k}:\Omega^k A &\ \rightarrow \mathcal{I}_{n-k}A, \\
    \omega' &\ \mapsto \vartheta\cdot\omega', \quad k = 0, 1, \dotsc, n,
    \end{align*}
    where the actions $\cdot$ are defined by {\rm (}\ref{BrzezinskiSitarz2017(2.3)}{\rm )}, are bijective.
    
    \item [\rm (4)] $(\Omega A, d)$ has a volume form $\omega$ such that all left multiplication maps
    \begin{align*}
        \ell_{\pi_{\omega}}^{k}:\Omega^k A &\ \rightarrow \mathcal{I}_{n-k}A, \\
        \omega' &\ \mapsto \pi_{\omega} \cdot \omega', \quad k=0,1, \dotsc, n-1,
    \end{align*}
    
    where $\pi_{\omega}$ is defined by {\rm (}\ref{BrzezinskiSitarz2017(2.1)}{\rm )}, are bijective.
\end{enumerate}
\end{proposition}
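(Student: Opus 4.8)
The plan is to establish the chain of equivalences $(1)\Leftrightarrow(2)$, $(2)\Leftrightarrow(3)$, $(3)\Leftrightarrow(4)$. The last two are essentially repackagings: they re-encode the whole family $\{\Theta_k\}_{k=0}^n$ by a single top-degree datum and translate the multiplicativity identity of $(2)$ into the bijectivity of left-multiplication maps. The substantive step is $(1)\Leftrightarrow(2)$, where the complex of integral forms and its hom-connection must be reconstructed from, or shown to produce, the multiplicative bimodule isomorphisms, with no explicit reference to $\nabla$ in $(2)$.

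For $(1)\Rightarrow(2)$ I would fix the integrability data $(\nu,\{\Theta_k\})$ and prove $\Theta_{k+m}(\omega'\wedge\omega'')=(-1)^{(n-1)m}\Theta_k(\omega')\cdot\omega''$ by induction on $m$. The base case $m=0$ is exactly the right $A$-linearity of $\Theta_k$ together with the definition (\ref{BrzezinskiSitarz2017(2.3)}) of the right action. For the step, use the density condition to write a form of degree $m+1$ as a finite sum of terms $\omega''\wedge da$ with $\omega''\in\Omega^mA$; expand $d(\omega'\wedge\omega''\wedge a)$ by the graded Leibniz rule to express $\omega'\wedge\omega''\wedge da$ through images of $d$, apply the commutativity $\nabla_{n-j-1}\circ\Theta_j=\Theta_{j+1}\circ d$, then the Leibniz rule for $\nabla_\bullet$ displayed after (\ref{BrzezinskiSitarz2017(2.5)}), and finally the induction hypothesis; the factors $(-1)^{(n-1)m}$ and the twist by $\nu$ collect correctly. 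For $(2)\Rightarrow(1)$ I would turn the defining diagram into a construction: set $\nabla:=\Theta_n\circ d\circ\Theta_{n-1}^{-1}\colon\mathcal I_1A\to A$ and, generally, $\nabla_{n-k-1}:=\Theta_{k+1}\circ d\circ\Theta_k^{-1}$. Then one checks that $\nabla$ is a divergence in the sense of (\ref{BrzezinskiSitarz2017(2.4)}) — apply $\Theta_n\circ d\circ\Theta_{n-1}^{-1}$ to $\phi\cdot a$ and rewrite using the $m=0$ and $m=1$ instances of the multiplicativity identity and the Leibniz rule for $d$ — that the canonical extensions built from this $\nabla$ via (\ref{BrzezinskiSitarz2017(2.5)}) coincide with the transported differentials $\Theta_{k+1}\circ d\circ\Theta_k^{-1}$, so the diagram commutes by construction, and that flatness $\nabla\circ\nabla_1=0$ holds because $d\circ d=0$ and the $\Theta_k$ are invertible.

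For $(2)\Leftrightarrow(3)$: given $(2)$, let $\vartheta$ be $\Theta_n$ composed with the canonical identification $\mathcal I_0A\cong A$, $\psi\mapsto\psi(1)$; it is an $A$-bimodule map $\Omega^nA\to{}^{\nu}A$. Specialising the multiplicativity identity to $k+m=n$ and evaluating at $1$ gives $\vartheta(\omega'\wedge\omega'')=(-1)^{(n-1)(n-k)}\Theta_k(\omega')(\omega'')$ for $\omega'\in\Omega^kA$, $\omega''\in\Omega^{n-k}A$, i.e. $\ell_\vartheta^k=(-1)^{(n-1)(n-k)}\Theta_k$, so each $\ell_\vartheta^k$ is bijective. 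Conversely, given $\vartheta$ as in $(3)$, define $\Theta_k:=(-1)^{(n-1)(n-k)}\ell_\vartheta^k$; that $\omega'\mapsto\vartheta\cdot\omega'$ is an $A$-bimodule map into ${}^{\nu}\mathcal I_{n-k}A$ follows from the $A$-actions on $\mathcal IA$ and (\ref{BrzezinskiSitarz2017(2.3)}), and the multiplicativity identity follows from the associativity of $\wedge$ and of the right $\Omega A$-action (\ref{BrzezinskiSitarz2017(2.3)}), with the sign bookkeeping reproducing $(-1)^{(n-1)m}$. For $(3)\Leftrightarrow(4)$: a volume form $\omega$ gives $\pi_\omega\in\mathcal I_nA$ which, from $a\omega=\omega\nu_\omega(a)$ and (\ref{BrzezinskiSitarz2017(2.1)}), is a bimodule isomorphism $\Omega^nA\to{}^{\nu_\omega}A$; taking $\vartheta:=\pi_\omega$, $\nu:=\nu_\omega$ matches $(3)$, noting that the case $k=n$ there is automatic since $\ell_\vartheta^n$ is just $\vartheta$ under $\mathcal I_0A\cong A$. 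Conversely a bijective bimodule map $\vartheta\colon\Omega^nA\to{}^{\nu}A$ makes $\Omega^nA$ free of rank one on both sides with generator $\omega:=\vartheta^{-1}(1)$, and one verifies $\vartheta=\pi_\omega$ and $\nu=\nu_\omega$, so the $\ell^k$ conditions transfer verbatim.

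\textbf{Main obstacle.} The delicate part is $(1)\Leftrightarrow(2)$: the inductive sign-and-Leibniz computation proving multiplicativity in $(1)\Rightarrow(2)$, and, in $(2)\Rightarrow(1)$, the verification that the canonical extension (\ref{BrzezinskiSitarz2017(2.5)}) of the \emph{defined} $\nabla$ reproduces the transported differentials and that $\nabla$ genuinely satisfies (\ref{BrzezinskiSitarz2017(2.4)}). Both are finite, bounded calculations, but keeping the twist by $\nu$ and the signs $(-1)^{(n-1)m}$ consistent throughout requires care; everything else is routine diagram-chasing.
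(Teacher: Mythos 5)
This proposition is quoted from Brzezi\'nski--Sitarz \cite[Theorem 2.2]{BrzezinskiSitarz2017} and the paper gives no proof of its own, so the only comparison available is with the cited source; your proposal is correct and follows essentially that standard argument. In particular your key steps --- the induction on $m$ via the density condition for $(1)\Rightarrow(2)$, the transported differentials $\nabla_{n-k-1}=\Theta_{k+1}\circ d\circ\Theta_k^{-1}$ for the converse, the identity $\ell_\vartheta^k=(-1)^{(n-1)(n-k)}\Theta_k$ with $\vartheta=\Theta_n$ linking $(2)$ and $(3)$, and the volume form $\omega=\vartheta^{-1}(1)$ with $\pi_\omega=\vartheta$, $\nu=\nu_\omega$ linking $(3)$ and $(4)$ --- all check out (including the sign bookkeeping), the only cosmetic difference being that you prove three biconditionals where one could close a single cycle $(1)\Rightarrow(2)\Rightarrow(3)\Rightarrow(4)\Rightarrow(1)$.
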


A volume form $\omega\in \Omega^nA$ is an {\em integrating form} if and only if it satisfies condition $(4)$ of Proposition \ref{integrableequiva} \cite[Remark 2.3]{BrzezinskiSitarz2017}.

The most interesting cases of differential calculi are those where $\Omega^k A$ are finitely generated and projective right or left (or both) $A$-modules \cite{Brzezinski2011}.

\begin{proposition}\label{BrzezinskiSitarz2017Lemmas2.6and2.7}
\begin{enumerate}
\item [\rm (1)] \cite[Lemma 2.6]{BrzezinskiSitarz2017} Consider $(\Omega A, d)$ an integrable and $n$-dimensional calculus over $A$ with integrating form $\omega$. Then $\Omega^{k} A$ is a finitely generated projective right $A$-module if there exist a finite number of forms $\omega_i \in \Omega^{k} A$ and $\overline{\omega}_i \in \Omega^{n-k} A$ such that, for all $\omega' \in \Omega^{k} A$, we have that 
\begin{equation*}
\omega' = \sum_{i} \omega_i \pi_{\omega} (\overline{\omega}_i \wedge \omega').
\end{equation*}

\item [\rm (2)] \cite[Lemma 2.7]{BrzezinskiSitarz2017} Let $(\Omega A, d)$ be an $n$-dimensional calculus over $A$ admitting a volume form $\omega$. Assume that for all $k = 1, \ldots, n-1$, there exists a finite number of forms $\omega_{i}^{k},\overline{\omega}_{i}^{k} \in \Omega^{k}(A)$ such that for all $\omega'\in \Omega^kA$, we have that
\begin{equation*}
\omega'=\displaystyle\sum_i\omega_{i}^{k}\pi_\omega(\overline{\omega}_{i}^{n-k}\wedge\omega')=\displaystyle\sum_i\nu_{\omega}^{-1}(\pi_\omega(\omega'\wedge\omega_{i}^{n-k}))\overline{\omega}_{i}^{k},
\end{equation*}

where $\pi_{\omega}$ and $\nu_{\omega}$ are defined by {\rm (}\ref{BrzezinskiSitarz2017(2.1)}{\rm )} and {\rm (}\ref{BrzezinskiSitarz2017(2.2)}{\rm )}, respectively. Then $\omega$ is an integral form and all the $\Omega^{k}A$ are finitely generated and projective as left and right $A$-modules.
\end{enumerate}
\end{proposition}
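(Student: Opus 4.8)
The plan is to reduce both parts to the \emph{dual basis lemma} characterizing finitely generated projective modules, with part~(2) additionally invoking Proposition~\ref{integrableequiva}(4). For part~(1), I would first observe that the integrability assumption enters only to guarantee that $\pi_\omega$ is available; the implication itself is purely module-theoretic. For each index $i$, the map $\phi_i \colon \Omega^k A \to A$, $\omega' \mapsto \pi_\omega(\overline{\omega}_i \wedge \omega')$, is right $A$-linear: associativity of $\wedge$ gives $\overline{\omega}_i \wedge (\omega' a) = (\overline{\omega}_i \wedge \omega')a$, and $\pi_\omega$ is right $A$-linear by \eqref{BrzezinskiSitarz2017(2.1)}, so $\phi_i(\omega' a) = \phi_i(\omega')a$ and $\phi_i \in \mathcal{I}_k A$. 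The hypothesis $\omega' = \sum_i \omega_i\,\phi_i(\omega')$ valid for all $\omega' \in \Omega^k A$ is then precisely the assertion that $\{\omega_i\} \subset \Omega^k A$ and $\{\phi_i\} \subset \mathrm{Hom}_A(\Omega^k A, A)$ form a finite dual basis, whence $\Omega^k A$ is finitely generated and projective as a right $A$-module.

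For part~(2), I would argue in two stages. In the first stage, the argument above applied to the first decomposition $\omega' = \sum_i \omega_i^{\,k}\,\pi_\omega(\overline{\omega}_i^{\,n-k} \wedge \omega')$ shows that each $\Omega^k A$ with $1 \le k \le n-1$ is finitely generated projective as a right $A$-module (the cases $k = 0, n$ being immediate). Dually, in the second decomposition $\omega' = \sum_i \nu_\omega^{-1}\!\bigl(\pi_\omega(\omega' \wedge \omega_i^{\,n-k})\bigr)\overline{\omega}_i^{\,k}$ one uses that $\nu_\omega$ is an algebra automorphism, which holds because $\omega$ is a volume form (see \eqref{BrzezinskiSitarz2017(2.2)}), to recognize each $\omega' \mapsto \nu_\omega^{-1}\bigl(\pi_\omega(\omega' \wedge \omega_i^{\,n-k})\bigr)$ as a left $A$-linear functional on $\Omega^k A$; this exhibits a finite left dual basis, so each $\Omega^k A$ is also finitely generated projective as a left $A$-module.

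In the second stage, to show that $\omega$ is an integrating form I would verify condition~(4) of Proposition~\ref{integrableequiva}, namely that each left multiplication map $\ell_{\pi_\omega}^{\,k} \colon \Omega^k A \to \mathcal{I}_{n-k}A$, $\omega' \mapsto \pi_\omega \cdot \omega'$, is bijective for $k = 0, \dots, n-1$. Injectivity is immediate from the second decomposition: if $\pi_\omega \cdot \omega' = 0$ then $\pi_\omega(\omega' \wedge \omega_i^{\,n-k}) = 0$ for every $i$, hence $\omega' = 0$. For surjectivity, given $\phi \in \mathcal{I}_{n-k}A = \mathrm{Hom}_A(\Omega^{n-k}A, A)$ I would take $\omega' := \sum_i \omega_i^{\,k}\,\phi(\overline{\omega}_i^{\,n-k}) \in \Omega^k A$ (allowing for a $\nu_\omega$-twist if the computation requires it) and verify, using the first decomposition in degree $n-k$, the right $A$-linearity of $\phi$, and the definition \eqref{BrzezinskiSitarz2017(2.3)} of the $\Omega A$-action, that $\pi_\omega \cdot \omega' = \phi$. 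The equivalence of (1) and (4) in Proposition~\ref{integrableequiva} then gives that $\omega$ is an integrating form, which together with the first stage completes part~(2).

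The dual-basis verifications are routine; I expect the main obstacle to be the bookkeeping of the left versus right $A$-actions and of $\nu_\omega^{\pm 1}$ in part~(2): one must commute the twisting automorphism past the actions in \eqref{BrzezinskiSitarz2017(2.1)}--\eqref{BrzezinskiSitarz2017(2.3)} in a consistent way, and in the surjectivity step one must first pin down the correct element $\omega'$ (together with its possible twist) before the verification reduces to a short manipulation.
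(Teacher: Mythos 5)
This proposition is quoted in the paper from \cite[Lemmas 2.6 and 2.7]{BrzezinskiSitarz2017} without any proof, so there is no in-paper argument to compare against; judged on its own terms, your proposal is the standard argument and coincides with the one in the original source. Part (1) is exactly the dual basis lemma: $\phi_i = \pi_\omega(\overline{\omega}_i\wedge -)$ is right $A$-linear and $\{\omega_i,\phi_i\}$ is a finite dual basis, so nothing more is needed. In part (2) your two-stage structure (two-sided dual bases, then verification of condition (4) of Proposition \ref{integrableequiva}) is correct, and your left-linearity check for $\omega'\mapsto \nu_\omega^{-1}(\pi_\omega(\omega'\wedge\omega_i^{n-k}))$ goes through because $\pi_\omega(a\eta)=\nu_\omega(a)\pi_\omega(\eta)$ for $\eta\in\Omega^nA$. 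The one concrete correction: your candidate preimage in the surjectivity step, $\omega':=\sum_i\omega_i^{k}\,\phi(\overline{\omega}_i^{n-k})$, does not work as written, because the scalar $\phi(\overline{\omega}_i^{n-k})$ is trapped between $\omega_i^k$ and the test form $\omega''$ and cannot be related to $\phi(\omega'')$. The correct witness uses the \emph{other} family together with the twist,
\begin{equation*}
\omega'=\sum_i\nu_\omega^{-1}\bigl(\phi(\omega_i^{n-k})\bigr)\,\overline{\omega}_i^{k},
\end{equation*}
for which $\pi_\omega(\omega'\wedge\omega'')=\sum_i\phi(\omega_i^{n-k})\,\pi_\omega(\overline{\omega}_i^{k}\wedge\omega'')=\phi(\omega'')$ by right $A$-linearity of $\phi$ and the first decomposition applied in degree $n-k$. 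You flagged exactly this point as the place needing care, so this is a repair of a hedged detail rather than a gap in the strategy.
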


Brzezi\'nski and Sitarz \cite[p. 421]{BrzezinskiSitarz2017} asserted that to connect the integrability of the differential graded algebra $(\Omega A, d)$ with the algebra $A$, it is necessary to relate the dimension of the differential calculus $\Omega A$ with that of $A$, and since we are dealing with algebras that are deformations of coordinate algebras of affine varieties, the {\em Gelfand-Kirillov dimension} introduced by Gelfand and Kirillov \cite{GelfandKirillov1966, GelfandKirillov1966b} seems to be the best suited. Briefly, given an affine $\Bbbk$-algebra $A$, the {\em Gelfand-Kirillov dimension of} $A$, denoted by ${\rm GKdim}(A)$, is given by
\[
{\rm GKdim}(A) := \underset{n\to \infty}{\rm lim\ sup} \frac{{\rm log}({\rm dim}\ V^{n})}{{\rm log}\ n},
\]

where $V$ is a finite-dimensional subspace of $A$ that generates $A$ as an algebra. This definition is independent of choice of $V$. If $A$ is not affine, then its Gelfand-Kirillov dimension is defined to be the supremum of the Gelfand-Kirillov dimensions of all affine subalgebras of $A$. An affine domain of Gelfand-Kirillov dimension zero is precisely a division ring that is finite-dimensional over its center. In the case of an affine domain of Gelfand-Kirillov dimension one over $\Bbbk$, this is precisely a finite module over its center, and thus polynomial identity. In some sense, this dimensions measures the deviation of the algebra $A$ from finite dimensionality. For more details about this dimension, see the excellent treatment developed by Krause and Lenagan \cite{KrauseLenagan2000}.

After preliminaries above, we arrive to the key notion of this paper.

\begin{definition}[{\cite[Definition 2.4]{BrzezinskiSitarz2017}}]\label{BrzezinskiSitarz2017Definition2.4}
An affine algebra $A$ with integer Gelfand-Kirillov dimension $n$ is said to be {\em differentially smooth} if it admits an $n$-dimensional connected integrable differential calculus $(\Omega A, d)$.
\end{definition}

From Definition \ref{BrzezinskiSitarz2017Definition2.4} it follows that a differentially smooth algebra comes equipped with a well-behaved differential structure and with the precise concept of integration \cite[p. 2414]{BrzezinskiLomp2018}.

As we said in the Introduction, different authors have characterized the differential smoothness of several noncommutative algebras \cite{Brzezinski2015, Brzezinski2016, BrzezinskiElKaoutitLomp2010, BrzezinskiLomp2018, BrzezinskiSitarz2017, DuboisVioletteKernerMadore1990, Karacuha2015, KaracuhaLomp2014, ReyesSarmiento2022, RubianoReyes2024DSBiquadraticAlgebras, RubianoReyes2024DSDoubleOreExtensions}. 

\begin{remark}
There are examples of algebras that are not differentially smooth. Consider the commutative algebra $A = \mathbb{C}[x, y] / \langle xy \rangle$. A proof by contradiction shows that for this algebra there are no one-dimensional connected integrable calculi over $A$, so it cannot be differentially smooth \cite[Example 2.5]{BrzezinskiSitarz2017}.
\end{remark}

\section{Differential smoothness of diffusion algebras}\label{DSDANgenerators}

Throughout this section, $\mathcal{D}$ denotes a diffusion algebra on $N\geq 3$ generators. 

\subsection{Diffusion algebras on four generators}\label{ClasDF4}

For the four-generator construction, the explicit algebras are given in Tables \ref{Diffusion4} and \ref{Diffusion4(2)}. It is worth noting that several of these algebras appear to be isomorphic upon some index change. However, isomorphism would have to be shown explicitly, and what is desired here is to obtain all possible combinations of algebras, even if some are isomorphic.

\begin{table}[h]
\caption{Diffusion algebras on four generators}
\label{Diffusion4}
\begin{center}
\resizebox{12.6cm}{!}{
\setlength\extrarowheight{6pt}
\begin{tabular}{ |c|c|c|c| } 
\hline
{\rm Diffusion algebra} & {\rm Relations} & {\rm Conditions} & {\rm Restrictions} \\
\hline
\multirow{12}{*}{$A_{I}$} & $gD_2D_1-gD_1D_2 = x_2D_1-x_1D_2, \quad gD_3D_1-gD_1D_3 = x_3D_1-x_1D_3$, & \multirow{3}{*}{$I=\{1, 2, 3\}$, $S=\{4\}$} & \multirow{3}{*}{$g, g_4 \ne 0$} \\
& $gD_3D_2-gD_2D_3 = x_3D_2-x_2D_3, \quad g_4D_4D_1-g_4D_1D_4=x_1D_4$,  & & \\
& $g_4D_4D_2-g_4D_2D_4=x_2D_4, \quad g_4D_4D_3-g_4D_3D_4=x_1D_4$ & & \\ \cline{2-4}
 & $gD_3D_1-gD_1D_3 = x_3D_1-x_1D_3, \quad gD_4D_1-gD_1D_4 = x_4D_1-x_1D_4$, & \multirow{3}{*}{$I=\{1, 3, 4\}$, $S=\{2\}$} & \multirow{3}{*}{$g, g_2 \ne 0$} \\
& $gD_4D_3-gD_3D_4 = x_4D_3-x_3D_4, \quad g_2D_2D_1-g_2D_1D_2=x_1D_2$,  & & \\
& $g_2D_3D_2-g_2D_2D_3=x_3D_2, \quad g_2D_4D_2-g_2D_2D_4=x_4D_2$ & & \\ \cline{2-4}
 & $gD_2D_1-gD_1D_2 = x_2D_1-x_1D_2, \quad gD_4D_1-gD_1D_4 = x_4D_1-x_1D_4$, & \multirow{3}{*}{$I=\{1, 2, 4\}$, $S=\{3\}$} & \multirow{3}{*}{$g, g_3 \ne 0$} \\
& $gD_4D_3-gD_3D_4 = x_4D_3-x_3D_4, \quad g_3D_3D_1-g_3D_1D_3=x_1D_3$,  & & \\
& $g_3D_3D_2-g_3D_2D_3=x_2D_3, \quad g_3D_3D_4-g_3D_4D_3=x_4D_3$ & & \\ \cline{2-4}
 & $gD_3D_2-gD_2D_3 = x_3D_2-x_2D_3, \quad gD_4D_2-gD_2D_4 = x_4D_2-x_2D_4$, & \multirow{3}{*}{$I=\{2, 3, 4\}$, $S=\{1\}$} & \multirow{3}{*}{$g, g_1 \ne 0$} \\
& $gD_4D_3-gD_3D_4 = x_4D_3-x_3D_4, \quad g_1D_2D_1-g_1D_1D_2=x_2D_1$,  & & \\
& $g_1D_3D_1-g_1D_1D_3=x_3D_1, \quad g_1D_4D_1-g_1D_1D_4=x_4D_1$ & & \\
\hline
\multirow{12}{*}{$A_{II}$} & $(g_1-g_2)D_1D_2= x_2D_1-x_1D_2, \quad (g_1-g_3)D_1D_3= x_3D_1-x_1D_3$, & \multirow{3}{*}{$I=\{1, 2, 3, 4\}$} &  \multirow{3}{*}{$g_i \ne g_j$ for $i, j \in I$, $i \ne j$}  \\ 
&  $(g_1-g_4)D_1D_4= x_4D_1-x_1D_4, \quad (g_2-g_3)D_2D_3= x_3D_2-x_2D_3$, & & \\
&  $(g_2-g_4)D_2D_4= x_4D_2-x_2D_4, \quad (g_3-g_4)D_3D_4= x_4D_3-x_3D_4$ & & \\ \cline{2-4}
& $(g_1-g_2)D_1D_2= x_2D_1-x_1D_2, \quad (g_1-g_3)D_1D_3= x_3D_1-x_1D_3$, & \multirow{3}{*}{$I = \{1, 2, 3\} $, $T_a^{\circ}=\{4 \}$}  &  \multirow{3}{*}{$g_i \ne g_j$ for $i, j \in I$, $i \ne j$, $g_i \ne g_1^{\circ}$} \\ 
&  $(g_2-g_3)D_3D_2 = x_3D_2-x_2D_3, \quad (g_1+g_1^{\circ})D_1D_4=-x_1D_4$, & & \\
& $(g_2+g_1^{\circ})D_2D_4=-x_2D_4, \quad (g_3+g_1^{\circ})D_3D_4=-x_3D_4$ & & \\ \cline{2-4}
& $(g_2-g_3)D_2D_3= x_3D_2-x_2D_3, \quad (g_2-g_4)D_2D_4= x_4D_2-x_2D_4$, & \multirow{3}{*}{$I = \{2, 3, 4\} $, $T_a^{\circ}=\{1 \}$}  &  \multirow{3}{*}{$g_i \ne g_j$ for $i, j \in I$, $i \ne j$, $g_i \ne g_1^{\circ}$} \\ 
&  $(g_3-g_4)D_4D_3 = x_4D_3-x_3D_4, \quad (g_2+g_1^{\circ})D_1D_2=-x_2D_1$, & & \\
& $(g_3+g_1^{\circ})D_1D_3=-x_3D_1, \quad (g_4+g_1^{\circ})D_1D_4=-x_4D_1$ & & \\ \cline{2-4}
& $(g_1-g_3)D_1D_3= x_3D_1-x_1D_3, \quad (g_1-g_4)D_1D_4= x_4D_1-x_1D_4$, & \multirow{3}{*}{$I = \{1, 3, 4\} $, $T_b^{\bullet}=\{2\}$}  &  \multirow{3}{*}{$g_i \ne g_j$ for $i, j \in I$, $i \ne j$, $g_i \ne \mp g_1^{\pm}$} \\ 
&  $(g_3-g_4)D_4D_3 = x_4D_3-x_3D_4, \quad (g_1+g_1^{+})D_1D_2=-x_1D_2$, & & \\
& $(g_1^{-}-g_3)D_2D_3=x_3D_2, \quad (g_1^{-}-g_4)D_2D_4=x_4D_2$ & & \\ \cline{2-4}
& $(g_1-g_2)D_1D_2= x_2D_1-x_1D_2, \quad (g_1-g_4)D_1D_4= x_4D_1-x_1D_4$, & \multirow{3}{*}{$I = \{1, 2, 4\} $, $T_b^{\bullet}=\{3\}$}  &  \multirow{3}{*}{$g_i \ne g_j$ for $i, j \in I$, $i \ne j$, $g_i \ne \mp g_1^{\pm}$} \\ 
&  $(g_2-g_4)D_2D_4 = x_4D_2-x_2D_4, \quad (g_1+g_1^{+})D_1D_3=-x_1D_3$, & & \\
& $(g_2+g_1^{+})D_2D_3=-x_2D_3, \quad (g_1^{-}-g_4)D_3D_4=x_4D_3$ & & \\ 
\hline
\multirow{37}{*}{$B$} & $gD_1D_2-(g-\Lambda)D_2D_1=x_2D_1-x_1D_2, \quad g_3D_1D_3-(g_3-\Lambda)D_3D_1=-x_1D_3$, & \multirow{3}{*}{$I=\{1, 2\}$, $S=\{3, 4\}$} & \multirow{3}{*}{$g \ne 0$, $g_s\not \in \{0, \Lambda\}$, $s\in S$} \\ 
& $g_4D_1D_4-(g_4-\Lambda)D_4D_1=-x_1D_4, \quad g_3D_2D_3-(g_3-\Lambda)D_3D_2=-x_2D_3$, & & \\
& $g_4D_2D_4-(g_4-\Lambda)D_4D_2=-x_2D_4$& & \\ \cline{2-4}
 & $gD_1D_3-(g-\Lambda)D_3D_1=x_3D_1-x_1D_3, \quad g_2D_1D_2-(g_2-\Lambda)D_2D_1=-x_1D_2$, & \multirow{3}{*}{$I=\{1, 3\}$, $S=\{2, 4\}$} & \multirow{3}{*}{$g \ne 0$, $g_2\ne 0$, $g_4 \not \in  \{0, \Lambda \}$} \\ 
& $g_4D_1D_4-(g_4-\Lambda)D_4D_1=-x_1D_4, \quad g_2D_3D_2-(g_2-\Lambda)D_2D_3=-x_3D_2$, & & \\
& $g_4D_3D_4-(g_4-\Lambda)D_4D_3=-x_3D_4$& & \\ \cline{2-4}
& $gD_1D_4-(g-\Lambda)D_4D_1=x_4D_1-x_1D_4, \quad g_3D_1D_3-(g_3-\Lambda)D_3D_1=-x_1D_3$, & \multirow{3}{*}{$I=\{1, 4\}$, $S=\{2, 3\}$} & \multirow{3}{*}{$g \ne 0$, $g_s\ne 0$, $s\in S$} \\ 
& $g_2D_1D_2-(g_2-\Lambda)D_2D_1=-x_1D_2, \quad g_3D_4D_3-(g_3-\Lambda)D_3D_4=-x_4D_3$, & & \\
& $g_2D_4D_2-(g_2-\Lambda)D_2D_4=-x_4D_2$& & \\ \cline{2-4}
 & $gD_2D_3-(g-\Lambda)D_3D_2=x_3D_2-x_2D_3, \quad g_1D_2D_1-(g_1-\Lambda)D_1D_2=-x_2D_1$, & \multirow{3}{*}{$I=\{2, 3\}$, $S=\{1, 4\}$} & \multirow{3}{*}{$g \ne 0$, $g_s\not \in  \{ 0, \Lambda \}$, $s\in S$} \\ 
& $g_4D_2D_4-(g_4-\Lambda)D_4D_2=-x_2D_4, \quad g_1D_3D_1-(g_1-\Lambda)D_1D_3=-x_3D_1$, & & \\
& $g_4D_3D_4-(g_4-\Lambda)D_4D_3=-x_3D_4$& & \\ \cline{2-4}
& $gD_2D_4-(g-\Lambda)D_4D_2=x_4D_2-x_2D_4, \quad g_1D_2D_1-(g_1-\Lambda)D_1D_2=-x_2D_1$, & \multirow{3}{*}{$I=\{2, 4\}$, $S=\{1, 3\}$} & \multirow{3}{*}{$g \ne 0$, $g_3\ne 0$, $g_1 \not \in \{0, \Lambda\}$} \\ 
& $g_3D_2D_3-(g_3-\Lambda)D_3D_2=-x_2D_3, \quad g_1D_4D_1-(g_1-\Lambda)D_1D_4=-x_4D_1$, & & \\
& $g_3D_4D_3-(g_3-\Lambda)D_3D_4=-x_4D_3$& & \\ \cline{2-4}
& $gD_3D_4-(g-\Lambda)D_4D_3=x_4D_3-x_3D_4, \quad g_1D_3D_1-(g_1-\Lambda)D_1D_3=-x_3D_1$, & \multirow{3}{*}{$I=\{3, 4\}$, $S=\{1, 2\}$} & \multirow{3}{*}{$g \ne 0$, $g_s\not \in \{0, \Lambda \}$, $s\in S$} \\ 
& $g_2D_3D_2-(g_2-\Lambda)D_2D_3=-x_3D_2, \quad g_1D_4D_1-(g_1-\Lambda)D_1D_4=-x_4D_1$, & & \\
& $g_2D_4D_2-(g_2-\Lambda)D_2D_4=-x_4D_2$& & \\ \cline{2-4}
& $gD_1D_2-(g-\Lambda)D_2D_1 = x_2D_1-x_1D_2, \quad g_3D_1D_3-(g_3-\Lambda)D_3D_1=-x_1D_3$, &  \multirow{3}{*}{$I=\{1, 2\}$, $S=\{3\}$, $T_a^{\circ}=\{4\}$} & \multirow{3}{*}{$g \ne 0$, $g_3, g_a^{\circ} \not \in \{0, \Lambda\}$} \\ 
& $g_3D_2D_3-(g_3-\Lambda)D_3D_2=-x_2D_3, \quad g_a^{\circ}D_1D_4=-x_1D_4$, &  & \\ 
& $(g_a^{\circ}-\Lambda)D_2D_4=-x_2D_4$, &  & \\ \cline{2-4}
& $gD_1D_2-(g-\Lambda)D_2D_1 = x_2D_1-x_1D_2, \quad g_4D_1D_4-(g_4-\Lambda)D_4D_1=-x_1D_4$, & 
\multirow{3}{*}{$I=\{1, 2\}$, $S=\{4\}$, $T_a^{\circ}=\{3\}$} & \multirow{3}{*}{$g \ne 0$, $g_4, g_a^{\circ} \not \in \{0, \Lambda\}$} \\ 
& $g_4D_2D_4-(g_4-\Lambda)D_4D_2=-x_2D_4, \quad g_a^{\circ}D_1D_3=-x_1D_3$, &  & \\ 
& $(g_a^{\circ}-\Lambda)D_2D_3=-x_2D_3$, &  & \\ \cline{2-4}
& $gD_1D_3-(g-\Lambda)D_3D_1 = x_3D_1-x_1D_3, \quad g_2D_1D_2-(g_2-\Lambda)D_2D_1=-x_1D_2$, & 
\multirow{3}{*}{$I=\{1, 3\}$, $S=\{2\}$, $T_a^{\circ}=\{4\}$} & \multirow{3}{*}{$g \ne 0$, $g_2 \ne 0, g_a^{\circ} \not \in \{0, \Lambda\}$} \\ 
& $g_2D_3D_2-(g_2-\Lambda)D_2D_3=-x_3D_2, \quad g_a^{\circ}D_1D_4=-x_1D_4$, &  & \\ 
& $(g_a^{\circ}-\Lambda)D_3D_4=-x_3D_4$, &  & \\ \cline{2-4}
& $gD_2D_3-(g-\Lambda)D_3D_2 = x_3D_2-x_2D_3, \quad g_1D_2D_1-(g_1-\Lambda)D_1D_2=-x_2D_1$, & 
\multirow{3}{*}{$I=\{2, 3\}$, $S=\{1\}$, $T_a^{\circ}=\{4\}$} & \multirow{3}{*}{$g \ne 0$, $g_1, g_a^{\circ} \not \in \{0, \Lambda\}$} \\ 
& $g_1D_3D_1-(g_1-\Lambda)D_1D_3=-x_3D_1, \quad g_a^{\circ}D_2D_4=-x_2D_4$, &  & \\ 
& $(g_a^{\circ}-\Lambda)D_3D_4=-x_3D_4$, &  & \\ \cline{2-4}
& $gD_2D_3-(g-\Lambda)D_3D_2 = x_3D_2-x_2D_3, \quad g_4D_2D_4-(g_4-\Lambda)D_4D_2=-x_2D_4$, & 
\multirow{3}{*}{$I=\{2, 3\}$, $S=\{4\}$, $T_a^{\circ}=\{1\}$} & \multirow{3}{*}{$g \ne 0$, $g_4, g_a^{\circ} \not \in \{0, \Lambda\}$} \\ 
& $g_4D_3D_4-(g_4-\Lambda)D_4D_3=-x_3D_4, \quad g_a^{\circ}D_1D_2=-x_2D_1$, &  & \\ 
& $(g_a^{\circ}-\Lambda)D_1D_3=-x_3D_1$, &  & \\ \cline{2-4}
& $gD_2D_4-(g-\Lambda)D_4D_2 = x_4D_2-x_2D_4, \quad g_3D_2D_3-(g_3-\Lambda)D_3D_2=-x_2D_3$, & 
\multirow{3}{*}{$I=\{2, 4\}$, $S=\{3\}$, $T_a^{\circ}=\{1\}$} & \multirow{3}{*}{$g \ne 0$, $g_3 \ne 0, g_a^{\circ} \not \in \{0, \Lambda\}$} \\ 
& $g_3D_4D_3-(g_3-\Lambda)D_3D_4=-x_4D_3, \quad g_a^{\circ}D_1D_2=-x_2D_1$, &  & \\ 
& $(g_a^{\circ}-\Lambda)D_1D_4=-x_4D_1$, &  & \\ \cline{2-4}
& $gD_3D_4-(g-\Lambda)D_4D_3 = x_4D_3-x_3D_4, \quad g_1D_3D_1-(g_1-\Lambda)D_1D_3=-x_3D_1$, & 
\multirow{3}{*}{$I=\{3, 4\}$, $S=\{1\}$, $T_a^{\circ}=\{2\}$} & \multirow{3}{*}{$g \ne 0$, $g_1, g_a^{\circ} \not \in \{0, \Lambda\}$} \\ 
& $g_1D_4D_1-(g_1-\Lambda)D_1D_4=-x_4D_1, \quad g_a^{\circ}D_2D_3=-x_3D_2$, &  & \\ 
& $(g_a^{\circ}-\Lambda)D_2D_4=-x_4D_2$, &  & \\ \cline{2-4}
& $gD_3D_4-(g-\Lambda)D_4D_3 = x_4D_3-x_3D_4, \quad g_2D_3D_2-(g_2-\Lambda)D_2D_3=-x_3D_2$, & 
\multirow{3}{*}{$I=\{3, 4\}$, $S=\{2\}$, $T_a^{\circ}=\{1\}$} & \multirow{3}{*}{$g \ne 0$, $g_2, g_a^{\circ} \not \in \{0, \Lambda\}$} \\ 
& $g_2D_4D_2-(g_2-\Lambda)D_2D_4=-x_4D_2, \quad g_a^{\circ}D_1D_3=-x_3D_1$, &  & \\ 
& $(g_a^{\circ}-\Lambda)D_1D_4=-x_4D_1$, &  & \\ \cline{2-4}
& $gD_1D_3-(g-\Lambda)D_3D_1 = x_3D_1-x_1D_3, \quad g_4D_1D_4-(g_4-\Lambda)D_4D_1=-x_1D_4$, & 
\multirow{3}{*}{$I=\{1, 3\}$, $S=\{4\}$, $T_b^{\bullet}=\{2\}$} & \multirow{3}{*}{$g \ne 0$, $g_4 \not \in \{0, \Lambda\}$, $g_b^{\pm} \ne 0$} \\ 
& $g_4D_3D_4-(g_4-\Lambda)D_4D_3=-x_3D_4, \quad g_b^{+}D_1D_2=-x_1D_2$, &  & \\ 
& $g_b^{-}D_2D_3=x_2D_3$, &  & \\ 
\hline
\end{tabular}
}
\end{center}
\end{table}

\begin{table}[h]
\caption{Diffusion algebras on four generators}
\label{Diffusion4(2)}
\begin{center}
\resizebox{12.6cm}{!}{
\setlength\extrarowheight{6pt}
\begin{tabular}{ |c|c|c|c| } 
\hline
{\rm Diffusion algebra} & {\rm Relations} & {\rm Conditions} & {\rm Restrictions} \\
\hline
\multirow{27}{*}{$B$} & $gD_1D_4-(g-\Lambda)D_4D_1 = x_4D_1-x_1D_4, \quad g_2D_1D_2-(g_2-\Lambda)D_2D_1=-x_1D_2$, & 
\multirow{3}{*}{$I=\{1, 4\}$, $S=\{2\}$, $T_b^{\bullet}=\{3\}$} & \multirow{3}{*}{$g \ne 0$, $g_2 \ne 0$, $g_b^{\pm} \ne 0$} \\ 
& $g_2D_4D_2-(g_2-\Lambda)D_2D_4=-x_4D_2, \quad g_b^{+}D_1D_3=-x_1D_3$, &  & \\ 
& $g_b^{-}D_3D_4=x_4D_3$, &  & \\ \cline{2-4}
& $gD_1D_4-(g-\Lambda)D_4D_1 = x_4D_1-x_1D_4, \quad g_3D_1D_3-(g_3-\Lambda)D_3D_1=-x_1D_3$, & 
\multirow{3}{*}{$I=\{1, 4\}$, $S=\{3\}$, $T_b^{\bullet}=\{2\}$} & \multirow{3}{*}{$g \ne 0$, $g_3\ne 0$, $g_b^{\pm} \ne 0$} \\ 
& $g_3D_4D_3-(g_3-\Lambda)D_3D_4=-x_4D_3, \quad g_b^{+}D_1D_2=-x_1D_2$, &  & \\ 
& $g_b^{-}D_2D_4=x_4D_2$, &  & \\ \cline{2-4}
& $gD_2D_4-(g-\Lambda)D_4D_2 = x_4D_2-x_2D_4, \quad g_1D_2D_1-(g_1-\Lambda)D_1D_2=-x_2D_1$, & 
\multirow{3}{*}{$I=\{2, 4\}$, $S=\{1\}$, $T_b^{\bullet}=\{3\}$} & \multirow{3}{*}{$g \ne 0$, $g_1 \not \in \{0, \Lambda\}$, $g_b^{\pm} \ne 0$} \\ 
& $g_1D_4D_1-(g_1-\Lambda)D_1D_4=-x_4D_1, \quad g_b^{+}D_2D_3=-x_2D_3$, &  & \\ 
& $g_b^{-}D_3D_4=x_4D_3$, &  & \\ \cline{2-4}
& $gD_1D_2-(g-\Lambda)D_2D_1 = x_1D_2-x_2D_1, \quad g_a^{\circ}D_1D_3=-x_1D_3$, & 
\multirow{3}{*}{$I=\{1, 2\}$, $T_a^{\circ}=\{3, 4\}$} & \multirow{3}{*}{$g \ne 0$, $g_a^{\circ} \not \in \{0, \Lambda\}$} \\ 
& $g_a^{\circ}D_1D_4=-x_1D_4, \quad (g_a^{\circ}-\Lambda)D_2D_3=-x_2D_3$, &  & \\ 
& $(g_a^{\circ}-\Lambda)D_2D_4=-x_2D_4$, &  & \\ \cline{2-4}
& $gD_1D_3-(g-\Lambda)D_3D_1 = x_1D_3-x_3D_1, \quad g_a^{\circ}D_1D_4=-x_1D_4$, & 
\multirow{3}{*}{$I=\{1, 3\}$, $T_b^{\bullet}=\{ 2\}$, $T_a^{\circ}=\{ 4\}$} & \multirow{3}{*}{$g \ne 0$, $g_a^{\circ} \not \in \{0, \Lambda\}$, $g_b^{\pm}\ne 0$} \\ 
& $(g_a^{\circ}-\Lambda)D_3D_4=-x_3D_4, \quad g_b^{+}D_1D_2=-x_1D_2$, &  & \\ 
& $g_b^{-}D_2D_3=x_3D_2$, &  & \\ \cline{2-4}
& $gD_1D_4-(g-\Lambda)D_4D_1 = x_1D_4-x_4D_1, \quad g_b^{+}D_1D_2=-x_1D_2$, & 
\multirow{3}{*}{$I=\{1, 4\}$, $T_b^{\bullet}=\{ 2, 3\}$} & \multirow{3}{*}{$g \ne 0$, $g_b^{\pm}\ne 0$} \\ 
& $g_b^{-}D_2D_4=x_4D_2, \quad g_b^{+}D_1D_3=-x_1D_3$, &  & \\ 
& $g_b^{-}D_3D_4=x_4D_3$, &  & \\ \cline{2-4}
& $gD_2D_3-(g-\Lambda)D_3D_2 = x_2D_3-x_3D_2, \quad g_a^{\circ}D_1D_2=-x_1D_2$, & 
\multirow{3}{*}{$I=\{2, 3\}$, $T_a^{\circ}=\{1, 4\}$} & \multirow{3}{*}{$g \ne 0$, $g_a^{\circ} \not \in \{0, \Lambda\}$} \\ 
& $g_a^{\circ}D_1D_3=-x_1D_3, \quad (g_a^{\circ}-\Lambda)D_1D_3=-x_3D_1$, &  & \\ 
& $(g_a^{\circ}-\Lambda)D_3D_4=-x_3D_4$, &  & \\ \cline{2-4}
& $gD_2D_4-(g-\Lambda)D_4D_2 = x_2D_4-x_4D_2, \quad g_a^{\circ}D_1D_2=-x_2D_1$, & 
\multirow{3}{*}{$I=\{2, 4\}$, $T_a^{\circ}=\{ 1\}$, $T_b^{\bullet}=\{ 3\}$} & \multirow{3}{*}{$g \ne 0$, $g_a^{\circ} \not \in \{0, \Lambda\}$, $g_b^{\pm}\ne 0$} \\ 
& $(g_a^{\circ}-\Lambda)D_1D_4=-x_4D_1, \quad g_b^{+}D_2D_3=-x_2D_3$, &  & \\ 
& $g_b^{-}D_3D_4=x_4D_3$, &  & \\ \cline{2-4}
& $gD_3D_4-(g-\Lambda)D_4D_3 = x_3D_4-x_4D_3, \quad g_a^{\circ}D_1D_3=-x_3D_1$, & 
\multirow{3}{*}{$I=\{3, 4\}$, $T_a^{\circ}=\{1, 2\}$} & \multirow{3}{*}{$g \ne 0$, $g_a^{\circ} \not \in \{0, \Lambda\}$} \\ 
& $g_a^{\circ}D_2D_3=-x_3D_2, \quad (g_a^{\circ}-\Lambda)D_1D_4=-x_4D_1$, &  & \\ 
& $(g_a^{\circ}-\Lambda)D_2D_4=-x_4D_2$, &  & \\ 
\hline
\multirow{8}{*}{$C$} & $g_2D_1D_2-(g_2-\Lambda_a)D_2D_1=-x_1D_2, \quad g_3D_1D_3-(g_3-\Lambda_a)D_3D_1=-x_1D_3$, & 
\multirow{3}{*}{$I=\{1\}$, $R_a=\{ 2, 3, 4 \}$} & \multirow{3}{*}{$g_2, g_3, g_4 \ne \Lambda_a$} \\ 
& $g_4D_1D_4-(g_4-\Lambda_a)D_4D_1=-x_1D_4$, &  & \\ \cline{2-4}
& $g_1D_2D_1-(g_1-\Lambda_a)D_1D_2=-x_2D_1, \quad g_3D_2D_3-(g_3-\Lambda_a)D_3D_2=-x_2D_3$, & 
\multirow{3}{*}{$I=\{2\}$, $R_a=\{ 1, 3, 4 \}$} & \multirow{3}{*}{$g_1 \ne 0$, $g_3, g_4 \ne \Lambda_a$} \\ 
& $g_4D_2D_4-(g_4-\Lambda_a)D_4D_2=-x_2D_4$, &  & \\ \cline{2-4}
& $g_1D_3D_1-(g_1-\Lambda_a)D_1D_3=-x_3D_1, \quad g_2D_3D_2-(g_2-\Lambda_a)D_2D_3=-x_3D_2$, & 
\multirow{3}{*}{$I=\{3\}$, $R_a=\{ 1, 2, 4 \}$} & \multirow{3}{*}{$g_1, g_2 \ne 0$, $g_4 \ne \Lambda_a$} \\ 
& $g_4D_3D_4-(g_4-\Lambda_a)D_4D_3=-x_3D_4$, &  & \\ \cline{2-4}
& $g_1D_4D_1-(g_1-\Lambda_a)D_1D_4=-x_4D_1, \quad g_2D_4D_2-(g_2-\Lambda_a)D_2D_4=-x_4D_2$, & 
\multirow{3}{*}{$I=\{4\}$, $R_a=\{ 1, 2, 3 \}$} & \multirow{3}{*}{$g_1, g_2, g_3 \ne 0$} \\ 
& $g_3D_4D_3-(g_3-\Lambda_a)D_3D_4=-x_4D_3$, &  & \\ 
\hline
\multirow{3}{*}{$D$} & $D_1D_2-q_{21}D_2D_1=0, \quad D_1D_3-q_{31}D_3D_1=0 $, & \multirow{3}{*}{$R=\{1, 2, 3, 4\}$} & \multirow{3}{*}{$q_{21}, q_{31}, q_{41}, q_{32}, q_{42}, q_ {43} \ne 0$} \\
& $D_1D_4-q_{41}D_4D_1=0, \quad D_2D_3-q_{32}D_3D_2=0,$ & & \\
& $D_2D_4-q_{42}D_4D_2=0, \quad D_3D_4-q_{43}D_4D_3=0$ & & \\
\hline
\end{tabular}
}
\end{center}
\end{table}

\subsection{Diffusion algebras on five generators}\label{ClasDF5}

As in Subsection \ref{ClasDF4}, all possible diffusion algebras on five generators are shown. These can be found in Tables \ref{Diffusion5(1)}, \ref{Diffusion5(2)}, \ref{Diffusion5(3)}, \ref{Diffusion5(4)}, \ref{Diffusion5(5)}, \ref{Diffusion5(6)}, \ref{Diffusion5(7)}, \ref{Diffusion5(8)}, \ref{Diffusion5(9)} and \ref{Diffusion5(10)}.

\begin{table}[h]
\caption{Diffusion algebras on five generators}
\label{Diffusion5(1)}
\begin{center}
\resizebox{12.6cm}{!}{
\setlength\extrarowheight{6pt}

}
\end{center}
\end{table}

\subsection{Differential smoothness of diffusion algebras on  \texorpdfstring{$N$}{Lg} generators}

If $\mathcal{D}$ is a diffusion algebra on $N$ generators, then $\text{GKdim}(\mathcal{D})=N$ \cite[Theorems 4.14 and 4.18]{Reyes2013}. Our aim in this section is to find an integral calculus of degree $N$ that guarantees the differentiable smoothness of $\mathcal{D}$.

\begin{theorem}\label{DSdiffalgN}
Let $\mathcal{D}$ be a diffusion algebra on $N\geq 3$ generators. If $\mathcal{D}$ satisfies any one of the following properties:
\begin{enumerate}
    \item [\rm (i)] $|I|=L$ and $|S|=N-L$, for $3\leq L \leq N$; 
    \item [\rm (ii)] $|I|=1$, $|S|= N-1$ and $g_s=\mathcal{G}\not\in \{0, \Lambda_a\}$ for all $s\in S$; 
    \item [\rm (iii)] $|I|=2$, $|S|=N-2$ and $g_s=\mathcal{G}\not\in \{0, \Lambda\}$ for all $s\in S$; or
    \item [\rm (iii)] $|I|=0$,
\end{enumerate}
\end{theorem}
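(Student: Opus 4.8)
Since ${\rm GKdim}(\mathcal{D})=N$ by \cite[Theorems 4.14 and 4.18]{Reyes2013}, by Definition \ref{BrzezinskiSitarz2017Definition2.4} it suffices to equip $\mathcal{D}$, in each of the cases (i)--(iv), with an $N$-dimensional connected integrable differential calculus $(\Omega\mathcal{D},d)$. The plan is to build this calculus directly from the PBW presentation. First I would let $\Omega^{1}\mathcal{D}$ be the free right $\mathcal{D}$-module on symbols $dD_{1},\dots,dD_{N}$ and fix its left $\mathcal{D}$-module structure by positing, for each pair $i,j$, a relation
\[
D_{i}\,dD_{j}=\sum_{k}c_{ij}^{k}\,dD_{k}\,D_{\sigma(k)}+\sum_{k}e_{ij}^{k}\,dD_{k}
\]
whose scalars are obtained by applying $d$ to the defining relations of $\mathcal{D}$ listed in Proposition \ref{PyatovTwarock2002Theorem3.5} (complemented, when an $S$-part is present, by the relations (\ref{PyatovTwarock2002(20)})--(\ref{PyatovTwarock2002(21)})). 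Then I would pass to higher degrees through the skew-commutation rules $dD_{i}\wedge dD_{j}=-p_{ij}\,dD_{j}\wedge dD_{i}$ and $dD_{i}\wedge dD_{i}=0$ for suitable scalars $p_{ij}$, so that $\Omega^{N}\mathcal{D}=\mathcal{D}\,\omega=\omega\,\mathcal{D}$ for the top form $\omega:=dD_{1}\wedge\cdots\wedge dD_{N}$ while $\Omega^{m}\mathcal{D}=0$ for $m>N$; connectedness holds because $\mathcal{D}$ is a PBW domain and $d$ kills exactly the scalars.

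The argument then splits according to the combinatorial type of $\mathcal{D}$. In case (iv), where $|I|=0$, every $x_{\alpha}$ vanishes and $\mathcal{D}$ is the type $D$ algebra with $D_{r}D_{s}=q_{sr}D_{s}D_{r}$ and all $q_{sr}\neq0$, i.e.\ a multiparameter quantum affine $N$-space; here the calculus carries no linear corrections, $\nu_{\omega}$ is the diagonal automorphism, and integrability is the classical computation for quantum polynomial algebras recalled in the Introduction. In cases (i) and (iii) the generators of $S$ enter only through the relations $g_{s}D_{s}D_{i}-g_{s}D_{i}D_{s}=x_{i}D_{s}$ (type $A_{I}$) or $g_{s}D_{{\bf i}}D_{s}-(g_{s}-\Lambda)D_{s}D_{{\bf i}}=-x_{{\bf i}}D_{s}$ and $g_{s}D_{s}D_{{\bf j}}-(g_{s}-\Lambda)D_{{\bf j}}D_{s}=x_{{\bf j}}D_{s}$ (type $B$); the hypothesis $g_{s}=\mathcal{G}\notin\{0,\Lambda\}$ in (iii), together with the fact that in a type $A_{I}$ algebra with $|I|\geq3$ one has $g_{is}=g_{si}=g_{s}\neq0$, makes each adjunction of a $D_{s}$ a skew polynomial extension with an \emph{invertible} affine twisting, which is exactly what keeps $\Omega^{1}\mathcal{D}$ free of rank $N$ on both sides and makes $\nu_{\omega}$ an automorphism; the intra-$S$ relations (\ref{PyatovTwarock2002(21)}) add a twisted-polynomial block handled the same way, using the connectivity condition (\ref{PyatovTwarock2002(23)}) and the uniform value $\mathcal{G}$. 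Case (ii), with $|I|=1$, is the type $C$ algebra and is treated identically using $g_{s}=\mathcal{G}\notin\{0,\Lambda_{a}\}$; case (i) with $L=N$ (types $A_{I}$ and $A_{II}$) reduces, after an affine change of the generators that linearises its relations, to an iterated Ore extension of a commutative polynomial ring. In every case one checks $d\circ d=0$ on the $D_{i}$, which amounts to the cocycle identities already guaranteed by the Diamond Lemma conditions that $\mathcal{D}$ satisfies.

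For integrability I would invoke Proposition \ref{integrableequiva}(4): it is enough to check that $\omega$ is a volume form and that the left multiplication maps $\ell_{\pi_{\omega}}^{k}\colon\Omega^{k}\mathcal{D}\to\mathcal{I}_{N-k}\mathcal{D}$, $\omega'\mapsto\pi_{\omega}\cdot\omega'$, are bijective for $k=0,\dots,N-1$; equivalently, by Proposition \ref{BrzezinskiSitarz2017Lemmas2.6and2.7}(2), to exhibit finite families $\omega_{J}^{k},\overline{\omega}_{J}^{\,k}\in\Omega^{k}\mathcal{D}$ with
\[
\omega'=\sum_{J}\omega_{J}^{k}\,\pi_{\omega}\bigl(\overline{\omega}_{J}^{\,N-k}\wedge\omega'\bigr)=\sum_{J}\nu_{\omega}^{-1}\bigl(\pi_{\omega}(\omega'\wedge\omega_{J}^{N-k})\bigr)\,\overline{\omega}_{J}^{\,k},\qquad\omega'\in\Omega^{k}\mathcal{D}.
\]
Taking $\omega_{J}^{k}=dD_{J}:=dD_{j_{1}}\wedge\cdots\wedge dD_{j_{k}}$ for each $k$-subset $J=\{j_{1}<\cdots<j_{k}\}$ of $\{1,\dots,N\}$ and $\overline{\omega}_{J}^{\,N-k}=dD_{J^{c}}$ rescaled by the inverse of the unit scalar relating $dD_{J}\wedge dD_{J^{c}}$ to $\omega$, the displayed identities become a finite computation with the scalars $p_{ij}$ and the PBW basis of $\mathcal{D}$; this at once shows that each $\Omega^{k}\mathcal{D}$ is finitely generated projective on both sides and that $(\Omega\mathcal{D},d)$ is integrable. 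Being $N$-dimensional, connected and integrable, $(\Omega\mathcal{D},d)$ witnesses the differential smoothness of $\mathcal{D}$. (Alternatively, in each case $\mathcal{D}$ is a bijective skew PBW extension of $\Bbbk$, and one may quote the differential-smoothness criterion for such extensions from \cite{RubianoReyes2024DSSPBWKt}.)

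The step I expect to be the main obstacle is not the integrability verification (once the calculus is set up correctly that reduces to bookkeeping with the wedge scalars) but the construction itself: one must check that the two-sided ideal of $T_{\mathcal{D}}(\Omega^{1}\mathcal{D})$ generated by the posited relations and the $\wedge$-anticommutation rules is consistent with the relations of $\mathcal{D}$, does not force $\Omega^{N}\mathcal{D}=0$, and supports a well-defined $d$ with $d^{2}=0$, and one must compute $\nu_{\omega}$ and confirm it is an \emph{automorphism}. This is precisely where the numerical hypotheses of (i)--(iv) enter: the exclusions $g_{s}\notin\{0,\Lambda\}$ and $g_{s}\notin\{0,\Lambda_{a}\}$, together with $L\geq3$ (which forces all $\{I,I,I\}$-triplets to be of one type and pins the $S$-$I$ parameter to $1$), guarantee that every twisting scalar appearing in the relations $D_{i}\,dD_{j}=\cdots$ is a unit, so that passing to exterior powers does not degenerate $\omega$; this matches the complementary Theorem \ref{noDSDiffN}, which exhibits diffusion algebras violating these conditions that fail to be differentially smooth.
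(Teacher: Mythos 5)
Your proposal follows essentially the same route as the paper: a free rank-$N$ first-order calculus with affine twisting relations read off the defining relations, anticommuting one-forms, the volume form $\omega = dD_1\wedge\cdots\wedge dD_N$, and integrability verified via Proposition \ref{BrzezinskiSitarz2017Lemmas2.6and2.7}(2) with the family $\omega_J^k = dD_J$ indexed by $k$-subsets $J$, with case (iv) likewise dispatched by quoting the quantum affine $N$-space result. The one piece you defer --- writing the left module structure as $D_i\,dD_j = dD_j\,\nu_{D_j}(D_i)$ for explicit affine automorphisms (e.g.\ $\nu_{D_i}(D_j)=D_j-g^{-1}x_j$ and $\nu_{D_s}(D_i)=D_i-g_s^{-1}x_i$ in case (i)) and checking that these are commuting algebra automorphisms compatible with the relations --- is precisely the content of the paper's proof, and it is where the hypotheses $g_s=\mathcal{G}\notin\{0,\Lambda\}$ (resp.\ $\{0,\Lambda_a\}$) enter, exactly as you anticipate.
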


then $\mathcal{D}$ is diferentially smooth.
\begin{proof}
From Section \ref{DefinitionsandpreliminariesDSA} we know that we must to consider $\Omega^{1}(\mathcal{D})$, a free right $\mathcal{D}$-module of rank $N$ with generators $dD_i$, $1\leq i \leq N$. Define a left $\mathcal{D}$-module structure by
    \begin{equation}\label{relrightmoddiffN}
        pdD_a = dD_a \nu_{D_a}(p), \quad {\rm for}\ a \in I \cup S, \ p\in \mathcal{D},
    \end{equation}
    
where $\nu_{D_a}$ are algebra automorphisms of $\mathcal{D}$.
\begin{itemize}
    \item [\rm (i)] Consider the maps given by
\begin{align}
   \nu_{D_i}(D_j) = &\ D_j-g^{-1}x_j, \text{ for } i,j \in I, \label{auto1case1N} \\ 
    \nu_{D_i}(D_s) = &\ D_s, \text{ for } i \in I, \ s\in S \label{auto2case1N} \\ 
    \nu_{D_s}(D_i) = &\ D_i-g_{s}^{-1}x_i, \text{ for } i \in I, \ s\in S, \quad {\rm and} \label{auto3case1N} \\ 
    \nu_{D_s}(D_l) = &\ D_l, \text{ for } s,l \in S. \label{auto4case1N}
\end{align}

It can be seen that the maps $\nu_{D_i}$ and $\nu_{D_s}$ for each $i \in I$ and $s \in S$ can be extended to algebra homomorphisms of $\mathcal{D}$. As a matter of fact, these maps respect the relations  {\rm (}\ref{PyatovTwarock2002(31)}{\rm )} for $\mathcal{D}$ and commute with each other.

Consider $\Omega^{1}(\mathcal{D})$ a free right $\mathcal{D}$-module of rank $N$ with generators $dD_i$, $dD_s$ for all $i\in I$ and $s\in S$. For every element $p\in \mathcal{D}$ define a left $\mathcal{D}$-module structure by
\begin{align}
    pdD_a = dD_a \nu_{D_a}(p), \quad {\rm with}\ a\in I \cup S \label{relrightmoddiff1N}.
\end{align}

The relations in $\Omega^{1}(\mathcal{D})$ are given by 
\begin{align}
D_idD_j = &\ dD_j (D_i-g^{-1}x_i), \quad i, j \in I, \notag \\
D_idD_s = &\ dD_s(D_i-g_s^{-1}x_i), \quad i\in I, s\in S, \notag \\
D_sdD_i = &\ dD_iD_s, \quad i\in I, s\in S, \ {\rm and} \label{relDiffD1N}\\
D_sdD_l = &\ dD_lD_s, \quad s, l \in S. \notag  
\end{align}

We want to extend the correspondences 
\begin{equation*}
D_a \mapsto d D_a, \quad {\rm for\ every}\ a\in I \cup S, 
\end{equation*} 

to a map $d: \mathcal{D} \to \Omega^{1}(\mathcal{D})$ satisfying the Leibniz's rule. This is possible if it is compatible with the nontrivial relations {\rm (}\ref{PyatovTwarock2002(31)}{\rm )}, i.e. if the following two equalities
\begin{align*}
        gdD_iD_j+gD_idD_j-gdD_jD_i-gD_jdD_i &\ = x_jdD_i-x_idD_j, \quad \text{ for } i,j \in I, \ {\rm and} \\
        g_sdD_sD_i+g_sD_sdD_i-g_sdD_iD_s-g_sD_idD_s &\ = x_idD_s, \quad \text{ for } i\in I, \ s \in S, 
\end{align*}

hold.

Define $\Bbbk$-linear maps 
\begin{equation*}
\partial_{D_a}: \mathcal{D} \rightarrow \mathcal{D}, \quad {\rm for\ every}\ a\in I \cup S,
\end{equation*}

such that
\begin{align*}
    d(p)=\sum_{a\in I\cup S}dD_a\partial_{D_a}(p), \quad {\rm for\ all} \ p \in \mathcal{D}.
\end{align*}

Since $dD_a$ with $a \in I \cup S$ are free generators of the right $\mathcal{D}$-module $\Omega^1(\mathcal{D})$, these maps are well-defined. Note that $d(p)=0$ if and only if $\partial_{D_a}(p)=0$ for each $a\in I \cup S$. By using the relations in {\rm (}\ref{relDiffD1N}{\rm )} and the definitions of the maps $\nu_{D_a}$ with $a\in I \cup S$, we get that
\begin{align*}
\partial_{D_a}(D_1^{k_1}\cdots D_N^{k_N}) = &\ {k_a}(D_1-g_a^{-1}x_1)^{k_1}\cdots (D_{a-1}-g_{a}^{-1}x_{a-1})^{k_{a-1}}D_a^{k_a-1}\cdots D_N^{k_N},
\end{align*}

where $g_{a}^{-1}=g^{-1}$ if $a\in I$, and $g_{a}^{-1}=g_{s}^{-1}$ if $r=s\in S$. 
Thus, $d(p)=0$ if and only if $p$ is a scalar multiple of the identity. This shows that $(\Omega \mathcal{D},d)$ is connected with $\Omega (\mathcal{D}) = \bigoplus_{k=0}^{n-1}\Omega^k (\mathcal{D})$.

The universal extension of $d$ to higher forms compatible with {\rm (}\ref{relrightmoddiff1N}{\rm )} gives the following rules for $\Omega^k(\mathcal{D})$ ($2\leq k \leq N-1$):
\begin{align}\label{relnwedgediff1N}
    \bigwedge_{r=1}^{k}dD_{q(r)} =  (-1)^{\sharp}\bigwedge_{r=1}^kdD_{p(r)}, 
\end{align}

where $q:\{1,\ldots,k\}\rightarrow \{1,\ldots,N\}$ is an injective map, $p:\{1,\ldots,k\}\rightarrow \text{Im}(q)$ is an increasing injective map and $\sharp$ is the number of $2$-permutation needed to transform $q$ into $p$.

By using that the automorphisms $\nu_{D_i}$ and $\nu_{D_s}$, for every $i \in I$ and $s\in S$, commute with each other, there are no additional relationships to the previous ones. In this way, 
\begin{align*}
 \Omega^{N-1}(\mathcal{D}) = &\ \left[dD_1\wedge dD_2\wedge \cdots \wedge dD_{N-1}\oplus dD_1\wedge dD_3\wedge \cdots \wedge dD_{N} \right. \\
    &\ \left. \oplus \cdots \oplus \ dD_2\wedge \cdots \wedge dD_{N}\right]\mathcal{D}. 
\end{align*}

Since $\Omega^N(\mathcal{D}) = \omega\mathcal{D}\cong \mathcal{D}$ as a right and left $\mathcal{D}$-module with 
$$
\omega = dD_1\wedge \cdots \wedge dD_N \quad {\rm and} \quad \nu_{\omega}=\nu_{D_1}\circ\cdots\circ\nu_{D_N}
$$

we have that $\omega$ is a volume form of $\mathcal{D}$. From Proposition \ref{BrzezinskiSitarz2017Lemmas2.6and2.7} (2) we obtain that $\omega$ is an integral form by setting
\begin{align*}
\omega_i^j = &\ \bigwedge_{k=1}^{j}dD_{p_{i,j}(k)}, \quad \text{ for } 1\leq i \leq \binom{N}{j}, \quad {\rm and} \\
\bar{\omega}_i^{N-j} = &\ (-1)^{\sharp_i}\bigwedge_{k=j+1}^{n}dD_{\bar{p}_{i,j}(k)},  \quad \text{for}\ 1\leq i \leq \binom{N}{j},
\end{align*}

where
$$
p_{i,j}:\{1,\ldots,j\}\rightarrow \{1,\ldots,N\} \quad {\rm and} \quad \bar{p}_{i,j}:\{j+1,\ldots,N\}\rightarrow (\text{Im}(p_{i,j}))^c
$$

are increasing injective maps and $\sharp_{i,j}$ is the number of $2$-permutation needed to transform $\{\bar{p}_{i,j}(j+1),\ldots, \bar{p}_{i,j}(N),p_{i,j}(1), \ldots, p_{i,j}(j)\}$ into $\{1, \ldots, N\}$. 

Let $\omega' \in \Omega^j(\mathcal{D})$. Then:
\[
\omega' = \displaystyle\sum_{i=1}^{\binom{N}{j}}\bigwedge_{k=1}^{j}dD_{p_{i,j}(k)}a_i, \quad {\rm with}\ a_i \in \Bbbk.
\] 

This implies that we have the equalities given by
{\footnotesize{
    \begin{align*}
    \sum_{i=1}^{\binom{N}{j}}\omega_{i}^{j}\pi_{\omega}(\bar{\omega}_i^{N-j}\wedge \omega') &\ =\sum_{i=1}^{\binom{N}{j}}\bigwedge_{k=1}^{j}dD_{p_{i,j}(k)}\pi_{\omega}\left(a_i(-1)^{\sharp_{i,j}}\bigwedge_{k=j+1}^{N}dD_{\bar{p}_{i,j}(k)}\wedge\bigwedge_{k=1}^{j}dD_{p_{i,j}(k)}\right) \\
    &\ = \sum_{i=1}^{\binom{N}{j}}\bigwedge_{k=1}^{j}dx_{p_{i,j}(k)}a_i \\ 
    &\ = \omega'.
    \end{align*}
}}

By Proposition \ref{BrzezinskiSitarz2017Lemmas2.6and2.7} (2) it follows that $\mathcal{D}$ is differentially smooth.

\item [\rm (ii)] Consider the maps given by
\begin{align}
    \nu_{D_{\bf i}}(D_{\bf i}) = &\ D_{\bf i}, \notag \\
    \nu_{D_{\bf i}}(D_s) = &\ \mathcal{G}(\mathcal{G}-\Lambda_a)^{-1}D_s, \label{auto1case2N} \\
    \nu_{D_s}(D_{\bf i}) = &\ \mathcal{G}^{-1}((\mathcal{G}-\Lambda_a)D_{\bf i}-x_{\bf i}),  \notag \\
    \nu_{D_s}(D_{l}) = &\ D_{l}, \quad \text{ for } \ s, l \in S.  \label{auto2case2N}
\end{align}

It is straightforward to show that the maps $\nu_{D_{\bf i}}$ and $\nu_{D_s}$ for each $s \in S$ can be extended to algebra homomorphisms of $\mathcal{D}$ and respect the relations  {\rm (}\ref{PyatovTwarock2002(34)}{\rm )} for $\mathcal{D}$. Again, these maps commute with each other.

Consider $\Omega^{1}(\mathcal{D})$ a free right $\mathcal{D}$-module of rank $N$ with generators $dD_{\bf i}$ for every $dD_s$ and $s\in S$. For all $p\in \mathcal{D}$ define a left $\mathcal{D}$-module structure by
\begin{align}
    pdD_a = dD_a \nu_{D_a}(p), \quad {\rm for \ every}\ a\in \{\textbf{i}\} \cup S \label{relrightmoddiff2N}.
\end{align}

The relations in $\Omega^{1}(\mathcal{D})$ are given by 
\begin{align}
D_{\bf i} dD_s = &\ dD_s\mathcal{G}^{-1}((\mathcal{G}-\Lambda_a)D_{\bf i}-x_{\bf i}), \quad s\in S, \notag \\
D_sdD_{\bf i} = &\  dD_{\bf i}\mathcal{G}(\mathcal{G}-\Lambda_a)^{-1}D_s, \quad s\in S, \ {\rm and} \notag \\
 D_sdD_l = &\ dD_lD_s, \quad  s, l \in S. \label{relDiffD2N}
\end{align}

We want to extend the correspondences 
\begin{equation*}
D_a \mapsto d D_a, \quad {\rm with} \ a\in \{\textbf{i}\} \cup S
\end{equation*} 

to a map $d: \mathcal{D} \to \Omega^{1}(\mathcal{D})$ satisfying the Leibniz's rule. By using the relations {\rm (}\ref{PyatovTwarock2002(34)}{\rm )} the equality
\begin{align*}
        \mathcal{G} dD_{\bf i} D_s+\mathcal{G} D_{\bf i} dD_s - (\mathcal{G} - \Lambda_a) dD_s D_{\bf i}-(\mathcal{G} - \Lambda_a) D_s dD_{\bf i} = -x_{\bf i}dD_s, \quad {\rm for\ all} \ s\in S, 
\end{align*}

must be satisfied.

Define $\Bbbk$-linear maps 
\begin{equation*}
\partial_{D_a}: \mathcal{D} \rightarrow \mathcal{D}, \quad {\rm for\ each} \ a\in \{\bf i\} \cup S,
\end{equation*}

such that
\begin{align*}
    d(p)=\sum_{a\in \{\textbf{i}\} \cup S}dD_a\partial_{D_a}(p), \quad {\rm for\ all} \ p \in \mathcal{D}.
\end{align*}

Since $dD_a$, $a \in \{\textbf{i}\} \cup S$,  are free generators of the right $\mathcal{D}$-module $\Omega^1(\mathcal{D})$, these maps are well-defined, and $d(p)=0$ if and only if $\partial_{D_a}(p)=0$, $a\in \{\textbf{i}\} \cup S$. By using the relations in {\rm (}\ref{relDiffD2N}{\rm )} and the definitions of the maps $\nu_{D_a}$, $a\in \{\textbf{i}\} \cup S$, we get that
{\footnotesize{
\begin{align*}
\partial_{D_{\bf i}}(D_1^{k_1}\cdots D_{\bf i}^{k_{\bf i}}\cdots D_N^{k_N}) = &\ \prod_{j=1}^{{\bf i}-1}\mathcal{G}^{k_j}(\mathcal{G}-\Lambda_a)^{-k_j}{k_{\bf i}}D_1^{k_1}\cdots D_{\bf i}^{k_{\bf i}-1}\cdots D_N^{k_N}, \\
\partial_{D_{s}}(D_1^{k_1}\cdots D_{\bf i}^{k_{\bf i}}\cdots D_N^{k_N}) = &\ k_{s}D_1^{k_1}\cdots D_{s}^{k_{s}-1}\cdots D_N^{k_N}, \text{ for } s< {\bf i}, \\
\partial_{D_{s}}(D_1^{k_1}\cdots D_{\bf i}^{k_{\bf i}}\cdots D_N^{k_N})  = &\ k_{s}\mathcal{G}^{-k_{\bf i}}D_1^{k_1}\cdots ((\mathcal{G}-\Lambda_a)D_{\bf i}-x_{\bf i})^{k_{\bf i}} \cdots D_{s}^{k_{s}-1}\cdots D_N^{k_N}, \text{ for } s> {\bf i}, 
\end{align*} 
}}

Thus $d(p)=0$ if and only if $p$ is a scalar multiple of the identity, whence shows $(\Omega \mathcal{D},d)$ is connected where $\Omega (\mathcal{D}) = \bigoplus_{k=0}^{n-1}\Omega^k (\mathcal{D})$.

From this treatment, the rest of the proof is completely analogous to case (i). Thus, $\mathcal{D}$ is differentially smooth.

\item [\rm (iii)] For all elements $s, l \in S$, consider the following maps
\begin{align*}
    \nu_{D_{\bf i}}(D_{\bf i}) = &\ g^{-1}((g-\Lambda)D_{\bf i}-x_{\bf i}), \\
    \quad \nu_{D_{\bf i}}(D_{\bf j}) = &\ (g-\Lambda)^{-1}(gD_{\bf j}-x_{\bf j}), \\
     \nu_{D_{\bf i}}(D_{s}) = &\ (\mathcal{G}-\Lambda)^{-1}\mathcal{G}D_s, \\
      \nu_{D_{\bf j}}(D_{\bf i}) = &\ g^{-1}((g-\Lambda)D_{\bf i}-x_{\bf i}), \\
      \nu_{D_{\bf j}}(D_{\bf j}) = &\ g^{-1}((g-\Lambda)D_{\bf j}-x_{\bf j}), \\
      \nu_{D_{\bf j}}(D_{s}) = &\ (\mathcal{G}-\Lambda)^{-1}\mathcal{G}D_s, \\
      \nu_{D_{s}}(D_{\bf i}) = &\ \mathcal{G}^{-1}((\mathcal{G}-\Lambda)D_{\bf i}-x_{\bf i}), \\
      \nu_{D_{s}}(D_{\bf j}) = &\ (\mathcal{G}-\Lambda)^{-1}(\mathcal{G}D_{\bf j}-x_{\bf j}), \quad {\rm and} \\
       \nu_{D_{s}}(D_{l}) &\ = D_{l}.
\end{align*}

Notice that these morphisms are similar to those corresponding in case (ii), so by using a similar reasoning we can prove that all of them guarantee that $\mathcal{D}$ is differentially smooth.

\item [\rm (iv)] In this case, the algebra $\mathcal{D}$ is precisely the {\em quantum affine} $N$-{\em space}, whence its differential smoothness follows from  \cite[Corollary 6 and Theorem 9]{KaracuhaLomp2014} or \cite[Corollary 4.9]{BrzezinskiLomp2018}.
\end{itemize}
\end{proof}

Following an argument similar to the one presented by Brzezi\'nki and Sitarz \cite[Example 2.5]{BrzezinskiSitarz2017}, we obtain the following.

\begin{theorem}\label{noDSDiffN}
If $|T| \ne 0$, then $\mathcal{D}$ is not differentially smooth.
\end{theorem}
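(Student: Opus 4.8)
The plan is to argue by contradiction, following the template of Brzezi\'nski--Sitarz \cite[Example 2.5]{BrzezinskiSitarz2017}: suppose $\mathcal{D}$ is differentially smooth, so it admits an $N$-dimensional connected integrable differential calculus $(\Omega\mathcal{D},d)$. Since $\mathrm{GKdim}(\mathcal{D})=N$ and $\mathcal{D}$ is generated by the $N$ elements $D_1,\dots,D_N$, a standard argument (as used implicitly in the proof of Theorem \ref{DSdiffalgN}) shows that $\Omega^1\mathcal{D}$ must be free of rank $N$ as a right $\mathcal{D}$-module, generated by $dD_1,\dots,dD_N$, with a left module structure governed by algebra automorphisms $\nu_{D_a}$ via $p\,dD_a = dD_a\,\nu_{D_a}(p)$. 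The key point is that the defining relations of $\mathcal{D}$ force strong constraints on these $\nu_{D_a}$, and when some $T_a$ (either $T_a^{\circ}$ or $T_a^{\bullet}$) is nonempty, those constraints become inconsistent.

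First I would isolate the obstruction at the level of a triplet. Recall from Proposition \ref{PyatovTwarock2002Theorem3.5} that whenever $T$ is nonempty there is at least one generator $D_t$ with $t\in T_a$ appearing in relations of the form $g_a^{\circ}\,:\!D_iD_t\!: = -x_iD_t$ (for $T_a^\circ$) or the pair $g_b^{+}D_iD_t = -x_iD_t$, $g_b^{-}D_tD_i = x_iD_t$ (for $T_a^\bullet$), together with at least one $i\in I$, so that $x_i\neq 0$. The crucial structural feature is that $D_t$ always occurs \emph{linearly with a nonzero scalar coefficient on the right} in every such relation, and there is \emph{no} relation expressing $D_t^2$ or mixing $D_t$ nontrivially with the quadratic part: concretely, a relation like $g_a^\circ D_iD_t = -x_iD_t$ can be rewritten as $(g_a^\circ D_i + x_i)D_t = 0$ in a putative quotient, but since $\mathcal{D}$ is a domain (it has a PBW basis, so it is an iterated Ore-type construction and in particular has no zero divisors among PBW monomials of the relevant shape) we actually get $g_a^\circ D_iD_t = -x_iD_t$ as a genuine straightening relation. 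Applying $d$ and the Leibniz rule to this relation, and then rewriting everything in the free right module $\Omega^1\mathcal{D}$ using $pdD_a = dD_a\nu_{D_a}(p)$, yields an identity that must hold identically in $\mathcal{D}$; comparing coefficients of $dD_t$ and of $dD_i$ separately forces, on the one hand, $\nu_{D_t}(D_i)$ or $\nu_{D_i}(D_t)$ to take a specific affine form, and on the other hand a numerical constraint that cannot be met because $x_i\neq 0$ while $D_t$ carries no $x_t$ (indeed $t\in R$ means $x_t=0$).

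The cleanest way to organize this is: (1) reduce to a single triplet $\{D_i,D_j,D_t\}$ with $i,j\in I$ (or $i\in I$, $j\in S\cup R$) and $t\in T_a$ — here I would invoke Proposition \ref{PyatovTwarock2002Theorem3.5} and the fact that $|I|\ge 1$ whenever $T\neq\emptyset$ (since $T$ is only defined when $|I|\ge 1$, by \eqref{PyatovTwarock2002(24)}); (2) write the forced form of the automorphisms $\nu_{D_i}$, $\nu_{D_t}$ dictated by the relations involving $D_t$, exactly as in the positive cases of Theorem \ref{DSdiffalgN} but tracking the asymmetry that $x_t=0$; (3) exhibit a second relation involving $D_t$ (for $T_a^\bullet$ the two relations $g_b^+D_iD_t=-x_iD_t$ and $g_b^-D_tD_i=x_iD_t$ already suffice; for $T_a^\circ$ with $|I|\ge 2$ use the two relations coming from two distinct $i,i'\in I$, or the relation \eqref{PyatovTwarock2002(26)}-induced constraint $g_{ti}=-g_{it'}$ together with the defining relation of the ambient $A$- or $B$-type algebra) and show these are simultaneously incompatible with connectedness of the calculus — specifically, tracing through, one finds that compatibility would force $d D_t = 0$, or would force a volume form that degenerates, contradicting that $\Omega^N\mathcal{D}\cong\mathcal{D}$ as a bimodule with $\nu_\omega$ an automorphism. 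I would model the bookkeeping on Brzezi\'nski--Sitarz's treatment of $\mathbb{C}[x,y]/\langle xy\rangle$, where the relation $xy=0$ kills the candidate top form.

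The main obstacle, and where care is needed, is step (3): verifying that the two (or more) relations constraining $D_t$ genuinely \emph{over}-determine the pair $(\nu_{D_i},\nu_{D_t})$ and that the resulting inconsistency survives for \emph{every} one of the many sub-cases in Tables \ref{Diffusion4}--\ref{Diffusion5(10)} and their $N$-generator analogues. The subtlety is that in a $T_a^\circ$ block with $|I|=1$ there is only one relation per $t$, so the contradiction cannot come from comparing two such relations alone; there one must bring in the normal-ordering relation \eqref{PyatovTwarock2002(20)} between $D_t$ and a generator $D_s$ with $s\in S$ (which exists precisely because $R_a$ being a $T_a$, not an $S_a$, means by \eqref{PyatovTwarock2002(24)} there is \emph{no} $i\in I$ with $g_{it}g_{ti}\neq 0$, yet $g_a^\circ\neq 0$ forces the relation above to be nontrivial), and derive the contradiction from the incompatibility of $d$ applied to $:\!D_sD_t\!:=0$ with the affine form of $\nu_{D_t}$ already pinned down. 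I would present this as a uniform lemma — "if $(\Omega\mathcal{D},d)$ is an $N$-dimensional connected calculus then $\nu_{D_t}=\mathrm{id}$ for $t\in T$, which then contradicts some defining relation" — and then dispatch the finitely many shapes of relation by the same one-line coefficient comparison.
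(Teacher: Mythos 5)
Your overall strategy---apply $d$ to the relations that involve a generator $D_t$ with $t\in T$, rewrite everything in a first-order calculus that is free of rank $N$ on $dD_1,\dots,dD_N$ with left action twisted by automorphisms $\nu_{D_a}$, and compare coefficients of the individual $dD_a$---is exactly the paper's argument. But you have misplaced where the contradiction comes from, and this leads you to build a large amount of unnecessary superstructure. A \emph{single} relation of $T$-type already kills the calculus: from $\mathcal{G}\,{:}D_iD_t{:}=-x_iD_t$ with, say, ${:}D_iD_t{:}=D_iD_t$, the Leibniz rule gives $\mathcal{G}\,dD_i\,D_t+\mathcal{G}\,dD_t\,\nu_{D_t}(D_i)+x_i\,dD_t=0$, and the coefficient of $dD_i$ is $\mathcal{G}D_t$, which must vanish; since $D_t$ is a PBW generator this forces $\mathcal{G}=0$, contradicting the nonvanishing conditions in Proposition \ref{PyatovTwarock2002Theorem3.5}. (In the other ordering one gets $\mathcal{G}\,\nu_{D_i}(D_t)=0$, contradicting that $\nu_{D_i}$ is an automorphism.) The structural point is not that ``$x_i\neq 0$ while $x_t=0$'' but that the relation contains only \emph{one} quadratic monomial, so the differential of the left-hand factor has nothing to cancel against. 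Consequently your step (3)---hunting for a second relation, proving an over-determination of $(\nu_{D_i},\nu_{D_t})$, the proposed lemma $\nu_{D_t}=\mathrm{id}$, and the special treatment of the ``only one relation per $t$'' situation---is all unnecessary, and the worry that the argument might fail when only one relation is available is unfounded. Executed as written, your plan would also commit you to checking the many sub-cases of Tables \ref{Diffusion4}--\ref{Diffusion5(10)}, whereas the uniform normal-ordered form (\ref{PyatovTwarock2002(31)})--(\ref{PyatovTwarock2002(33)}) handles all of them at once.

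One further caution, which applies equally to the paper's own proof: the reduction of ``$\mathcal{D}$ admits an $N$-dimensional connected integrable calculus'' to ``$\Omega^1\mathcal{D}$ is free of rank $N$ on the $dD_a$ with $p\,dD_a=dD_a\,\nu_{D_a}(p)$ for algebra automorphisms $\nu_{D_a}$'' is asserted, not proved; your phrase ``a standard argument shows'' does not discharge it. If you want a complete proof you should either justify this reduction (e.g.\ via the volume form and Proposition \ref{integrableequiva}, as in \cite[Example 2.5]{BrzezinskiSitarz2017}) or restrict the claim to calculi of this form.
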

\begin{proof}
Consider the sets $T^{\circ}$ and $T^{\bullet}$ as the only connective components in the decomposition in expression (\ref{Taeq}) of the subset $T \in I_N$.
\begin{enumerate}
    \item [\rm (i)] Suppose that $|T^{\circ}|\geq 1$. We can write the relations relevant to $T^{\circ}$ in the unique form given by 
\begin{align}\label{relnoDS}
    \mathcal{G} :D_iD_t: = -x_iD_t, \quad i\in I, \ t\in T_a^{\circ},
\end{align}

where $\mathcal{G}\in \{g_a^{\circ}, g_i+g_a^{\circ}, g_a^{\circ}- \Lambda \}$ depending on the case of Proposition (\ref{PyatovTwarock2002Theorem3.5}).

By applying the differential $d$ to the expression (\ref{relnoDS}) we get that
\begin{align*}
   \mathcal{G} d(:D_iD_t:) = d(-x_iD_t), \quad i\in I, \ t\in T_a^{\circ}.
\end{align*}

If $:D_iD_t:=D_iD_t$, by using that $d$ is $\Bbbk$-linear the Leibniz's rule yields that 
\begin{align*}
   \mathcal{G} dD_iD_t+ \mathcal{G}D_idD_t+x_idD_t=0, \quad i\in I, t\in T_a^{\circ}.
\end{align*}

By (\ref{BrzezinskiSitarz2017(2.2)}) the action of the module is written using automorphism $\nu_{D_t}$
\begin{align*}
    \mathcal{G} dD_iD_t+ \mathcal{G}dD_t\nu_{D_t}(D_i)+x_idD_t=0, \quad i\in I, t\in T_a^{\circ}.
\end{align*}

In this case, we have that 
\begin{align*}
    \mathcal{G}D_t=0,  \quad  t\in T_a^{\circ},
\end{align*}

which occurs only if $\mathcal{G}=0$. This fact contradicts any of the cases in Proposition (\ref{PyatovTwarock2002Theorem3.5}).

On the other hand, if $:D_iD_t:=D_tD_i$, using an argument similar to the previous one, we obtain that 
\begin{equation*}
    \mathcal{G}\nu_{D_i}(D_t)=0, \quad i\in I, t\in T_a^{\circ}.
\end{equation*}

If $\mathcal{G}=0$, once more again this contradicts any of the cases of the Proposition (\ref{PyatovTwarock2002Theorem3.5}). On the other hand, if $\nu_{D_i}(D_t)=0$, this contradicts the fact that $\nu_{D_i}$ is an automorphism. 

\item [\rm (ii)] Suppose that $|T^{\bullet}|\geq 1$. We can write the relations relevant to $T^{\bullet}$ in the following unique form 
\begin{align}\label{relnoDS2}
    \mathcal{G} :D_iD_t: = \text{sgn}(i-t) x_iD_t, \quad i\in I, \ t\in T_b^{\bullet},
\end{align}

where $\mathcal{G}\in \{g_b^{+}, g_b^{-}, g_i+g_b^{+}, g_b^{-}-g_i \}$, depending on the case of Proposition (\ref{PyatovTwarock2002Theorem3.5}). The proof is exactly the same as in the expression (\ref{relnoDS}): we only need to change the sign of $i-t$. 
\end{enumerate}
\end{proof}

\section{Acknowledgments}

The authors gratefully acknowledge to Professors Brzezi{\'n}ski, Kr\"ahmer and Lomp for their comments and suggestions that improved the article.

The authors were supported by Faculty of Science, Universidad Nacional de Colombia - Sede Bogot\'a, Colombia [grant number 53880].

\section{Declarations}

\subsection{Conflict of interest}

The authors have no conflicts to disclose.

\end{document}